\NewDocumentCommand{\pair}{mm}{%
	% dash→en-dash replacements
	\StrSubstitute{#1}{-}{\text{--}}[\Xsub]%
	\StrSubstitute{#2}{-}{\text{--}}[\Ysub]%
	% lengths
	\StrLen{#1}[\lenX]%
	\StrLen{#2}[\lenY]%
	% condition: single-letter A–H and single-digit 1–8
	\IfStrEq{\lenX}{1}{%
		\IfStrEq{\lenY}{1}{%
			\IfSubStr{ABCDEFGH}{#1}{%
				\IfSubStr{12345678}{#2}{%
					\(#1#2\)%
				}{%
					\((\Xsub,\Ysub)\)%
				}%
			}{%
				\((\Xsub,\Ysub)\)%
			}%
		}{%
			\((\Xsub,\Ysub)\)%
		}%
	}{%
		\((\Xsub,\Ysub)\)%
	}%
}
\title{Fog of War Chess} %TODO Please add
\author{Matthias {Gehnen}}{RWTH Aachen University, Germany \and \url{https://tcs.rwth-aachen.de/users/gehnen/} }{gehnen@cs.rwth-aachen.de}{https://orcid.org/0000-0001-9595-2992}{}%TODO mandatory, please use full name; only 1 author per \author macro; first two parameters are mandatory, other parameters can be empty. Please provide at least the name of the affiliation and the country. The full address is optional. Use additional curly braces to indicate the correct name splitting when the last name consists of multiple name parts.
\author{Julius {Stannat}}{RWTH Aachen University, Germany {} }{julius.stannat@rwth-aachen.de}{https://orcid.org/0009-0008-8809-1133}{}
\authorrunning{M. Gehnen, J. Stannat} %TODO mandatory. First: Use abbreviated first/middle names. Second (only in severe cases): Use first author plus 'et al.'
\keywords{Chess, Endgame, King, Queen, Rook} %TODO mandatory; please add comma-separated list of keywords
\begin{document}

\maketitle

\begin{abstract}
        Fog of War chess is a popular variant of classical chess, in which both players have only partial information about the position of the opponent's pieces. This study provides the first theoretical analysis of endgames in Fog of War chess. In particular, we analyze the setups king and queen versus king, king and rook versus king, and king and two rooks versus king. We show that a king and queen can always guarantee a win against a lone king. In contrast to classical chess, a king and a rook cannot guarantee a win against a lone king. However, adding one more rook guarantees a win.
\end{abstract}

\section{Introduction}
Fog of War chess, also known as Dark chess, is a variant of the popular board game chess, where each player can only see the squares to which their pieces can legally move.
The game was invented by Jens Bæk Nielsen and Torben Osted in 1989~\cite{darkness} and is based on Kriegspiel, which is the first variant of chess with limited information, introduced in 1899~\cite{zhang2025general}.

The aspect of imperfect information adds uncertainty to the game, as players must make decisions based on incomplete information about their opponent's pieces and positions.
Recently, in June 2025, the design of an AI-strategy is described, also first theoretical thoughts about the scenario of one king versus another king are mentioned~\cite{zhang2025general}.

Even in the wider social and cultural context, Fog of War chess is recognized:
Notably, on the popular chess website chess.com, tens of thousands of players actively play Fog of War chess, making it one of the most popular chess variants~\cite{chesscom_leaderboard}.
World-class players such as Hikaru Nakamura are known for playing the variant~\cite{youtube:gmhikaru}, and it has also gained a lot of traction online with millions of views~\cite{youtube:gothamchess}.

Also, other games for which imperfect information plays an important role have already been studied in the past, including minesweeper~\cite{minesweeper}, tetris~\cite{tetris}, or other video games~\cite{videoGames}.

In this article, we will study endgames in Fog of War chess from a theoretical perspective:

The main contribution therefore, is three different endgames in Fog of War chess, where we show which combination of pieces can guarantee a win against a lone king. 

First, we will introduce the rules of Fog of War chess, including some terminology.

In the three following sections, we analyze which pieces are sufficient to guarantee a win against a lone king, given the opponent's initial king position, but without knowledge of their subsequent moves. We will first show that a king and queen can always guarantee a win. After that, we consider the setup king and rook versus a king. We will show that there are starting positions where the player with the rook cannot guarantee a win. This is in contrast to classical chess, where a king and rook can always win against a lone king, given he starts and plays optimally~\cite{rook_win}.
Lastly, we consider the setup king and two rooks versus a king. We will show that the player with the two rooks can guarantee a win.

We conclude this article with some open questions.

\subsection{Rules}
The rules of Fog of War chess are similar to those of classical chess~\cite{zhang2025general}. The chess pieces move as in classical chess.
The imperfect information changes the game as follows:
\begin{itemize}
	\item The players observe only the squares their pieces can legally move to, after each move.
	\item There is no check or checkmate; a player wins by taking the opponent's king. Thus,
	\begin{itemize}
		\item if one player moves into a check with its king, then this is legal, but immediately loses the game, as there is no better move for the opponent than taking the opponent.
		\item there is no stalemate, particularly if one player can only move to a square which is attacked by the opponent, and it is his turn, he loses.
	\end{itemize}
	\item A draw does not have to be claimed. In particular, a threefold repetition and the draw after 50 moves without a pawn move or capture must be enforced by the game host.
	\item There are additional rules for pawns:
	\begin{itemize}
		\item If a pawn's forward move is blocked by a piece, then it cannot observe that piece. But of course, he then knows that there is an opponent's piece in front of him.
		\item En passant is allowed; therefore, if a pawn can do so, the opponent's pawn is visible.
	\end{itemize}
\end{itemize}

Within this study, the pawn rules are not relevant; furthermore, we ignore the draw rules.
\subsection{Notation}

We use the classic chess notation to refer to squares on the board as depicted in figure \ref{fig:notation}.

\begin{figure}[]
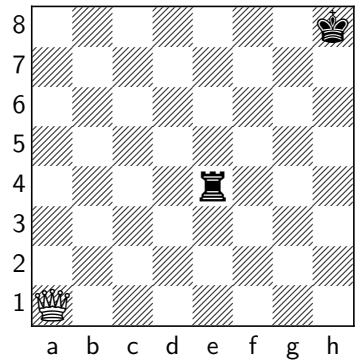

	\centering
	\newchessgame
	\setchessboard{
		showmover=false,
		boardfontsize=15pt,
		setpieces={Qa1, kh8, re4}
	}
	\chessboard
	\caption{White queen on \pair{A}{1}, Black king on \pair{H}{8}, Black rook on \pair{E}{4}}
			\label{fig:notation}
\end{figure}

The file of the chessboard is its column ($A-H$), while the rank is its row ($1-8$) \cite{chesscom_terms}. We call a square an \emph{edge-} or a \emph{corner} square, when it is at the edge (resp. corner) of the board. Without loss of generality, we will present strategies for White for the rest of this article. With, e.g., \pair{A-C}{1-3}, we denote the set of all squares on the rectangle between \pair{A}{1} and \pair{C}{3}. A \emph{configuration} is the position of all pieces on the board.
\section{Queen and King vs. King}
In classical chess, a player with a queen and a king can always force a win against a lone king \cite{rook_win}. In this section, we will show that this is also possible in Fog of War Chess:

We will present a strategy for White that guarantees a win for any given configuration and regardless of Black's moves.
Therefore, we need to show that White's strategy holds for any move Black performs.
This may lead to multiple possible positions of the Black king, of which White does not know where exactly Black is.

The winning strategy for White proceeds in two stages: 
\begin{enumerate}
	\item White moves the king onto a \emph{corner square} (\pair{A}{1}, \pair{H}{1}, \pair{A}{8}, or \pair{H}{8}) and places the queen on an adjacent square. For instance, if the king is on \pair{A}{1}, then the queen is placed on either \pair{A}{2}, \pair{B}{2}, or \pair{B}{1}. We call this a \emph{corner configuration}.
	\item White gradually reduces the set of squares available for the opponent's king until capture becomes inevitable. This is done by pushing the opponent's king towards an edge.
\end{enumerate}

\subsection{Preparing Observations}
We begin with a number of observations laying the ground for the winning strategy.

First, we note that whenever the opponent's king can be captured, White should do so.
This implies that as long as the White king or queen are not moved, they cannot be captured in the next round, as they would have been able to capture the Black king before.

Secondly, the queen can always move around their own king regardless of Blacks position.
\begin{lemma}
	 If the queen is adjacent to her own king, the queen can reach any other square adjacent to the king. If the queen gets captured, White wins immediately.
\end{lemma}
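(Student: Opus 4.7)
The plan is to move the queen step by step along the ring of squares around White's king, using only one-square moves that each keep her adjacent to the king.

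First I would justify the safety clause by observing that whenever the queen sits on a square $s$ adjacent to the White king, the only Black piece on the board, the king, can capture her only by moving onto $s$. In that case the Black king itself lands adjacent to the White king, so on White's next move the White king captures the Black king and White wins. Thus the queen is effectively protected at every square along her route, and the ``if the queen gets captured, White wins immediately'' part of the lemma is automatic.

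For reachability, I list the (at most eight) squares adjacent to the White king in clockwise order as a cycle $v_0, v_1, \ldots, v_{k-1}$. Any two consecutive squares $v_i, v_{i+1}$ share either a file or a rank and lie at distance exactly one, so the queen can move between them in a single queen step; since this move has no intermediate squares, it is never obstructed by the White king. When the White king stands on an edge or in a corner of the board, some $v_i$ lie off the board, but the remaining squares still form a connected path under these one-square moves. Hence from her current adjacent square the queen can walk along the ring one step at a time until she lands on the target square, each intermediate position being king-adjacent and therefore safe by the previous paragraph; if the Black king happens to occupy some intermediate destination, the queen captures it and ends the game.

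I do not foresee a real obstacle: the only work is the finite verification that consecutive squares on the ring are always a valid one-step queen move away, which holds by direct inspection for every interior, edge, or corner position of the White king.
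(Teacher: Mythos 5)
Your proof is correct and follows essentially the same approach as the paper: the queen is safe on any king-adjacent square because a capturing Black king would itself land next to the White king and be recaptured, and the queen can travel between adjacent squares while staying adjacent. The only difference is that you spell out reachability via a step-by-step walk around the ring, whereas the paper simply asserts that at most two queen moves suffice; your version is slightly more explicit but not a different argument.
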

\begin{proof}
	Whenever the queen moves to a square adjacent to their own king, she cannot be captured. If Blacks king were to capture her, then he has moved his king to a place adjacent to the White king. As the next move is from White, he will capture the Black king with his own king and therefore win. It is also easy to see that White needs at most two moves to move its queen from one square adjacent to their king to another square.
\end{proof}

\subsection{Stage 1: Achieving the corner configuration}
Our goal is to reach the \emph{corner configuration} in finitely many moves. 

The objective of moving the White king to a corner and the queen to an adjacent square can be divided into different cases:
If the king is initially on one of the central squares (\pair{D}{4}, \pair{D}{5}, \pair{E}{4}, or \pair{E}{5}), then the first step is to move him out of the center. Once the king leaves the center, or if he initially does not start on a square in the center, we move the queen to an adjacent square. Finally, we coordinate the king and the queen to reach the corner configuration.

We begin by proving the last step: if the king is not in the middle of the board and the queen is located on an adjacent square, then the corner configuration can be reached.
\begin{lemma}\label{lem:kq_goal}
	The White player has a king and a queen. The Black player only has a king. It is White's turn to move. The Black king can be on any square of the board.
	
	Suppose the White king is not on one of the central squares \pair{D}{4}, \pair{D}{5}, \pair{E}{4}, or \pair{E}{5}, and the White queen is on a square adjacent to the king. Then either
	\begin{itemize}
		\item the corner configuration can be reached in finitely many moves,  or
		\item White wins.
	\end{itemize}
\end{lemma}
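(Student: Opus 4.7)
The plan is to devise an invariant-preserving strategy and argue that it eventually reaches the corner configuration (or wins outright). The invariant I would maintain is that, at the start of every White turn, the queen sits on a square adjacent to the White king. This holds at the beginning by hypothesis, and by the preceding lemma it implies both that the queen can be safely relocated to any other square adjacent to the king in at most two of White's turns, and that no White piece can be captured by Black without Black losing the king on the following move.

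First I would fix a target corner $C$ at minimum Chebyshev distance from the current White king. Because the king is not on one of the four central squares, $C$ is at strictly positive distance and there exists a king step that strictly decreases this distance. The strategy then repeats the following single iteration: let $K$ be the current king square and $K'$ an adjacent square strictly closer to $C$; choose a square $S$ that is adjacent to both $K$ and $K'$, which always exists on the board and is automatically free of the Black king since $S$ is adjacent to the White king; reposition the queen to $S$ in at most two of White's moves using the preceding lemma, keeping the king on $K$ throughout; finally, move the king from $K$ to $K'$. The invariant is restored because $S$ is adjacent to $K'$ by choice.

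After each iteration the Chebyshev distance from the White king to the target corner decreases by one, so the loop terminates in at most a handful of king moves. When it ends the king stands on the corner and the queen on an adjacent square, which is precisely the corner configuration. If at any point during the maneuver Black captures the queen, or more generally moves the Black king onto a square adjacent to the White king, then White replies by capturing the Black king on the next move and wins immediately, which is the second alternative allowed by the statement.

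The main subtlety I anticipate is showing that a suitable bridge square $S$ is always available, in particular when $K$ lies on an edge or next to a corner of the board where several king-adjacent squares fall off. This reduces to a short geometric check: any two king-adjacent on-board squares share at least one on-board common neighbor, and such a neighbor is never occupied by the Black king since it is adjacent to the White king. Beyond this case inspection, the correctness of the strategy follows directly from the invariant and the preceding lemma.
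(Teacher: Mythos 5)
There is a genuine gap: you never establish that the king's step from $K$ to $K'$ is safe, and with the maneuver as described it is not. The danger in Fog of War is not Black stepping next to the White king (White sees all squares adjacent to his own king and would simply capture); it is White stepping next to the unseen Black king, after which Black captures the White king and wins outright. So before playing $K \to K'$, White must be certain that \emph{every} square adjacent to $K'$ is currently observed by a White piece. Parking the queen on a single bridge square $S$ adjacent to both $K$ and $K'$ does not achieve this. Concretely, take $K=\,$\pair{C}{5} heading for \pair{A}{8}, so $K'=\,$\pair{B}{6} and $S\in\{\text{\pair{B}{5}},\text{\pair{C}{6}}\}$: with the queen on \pair{B}{5}, the squares \pair{A}{7} and \pair{C}{7} are adjacent to $K'$ but lie on no rank, file or diagonal through \pair{B}{5} and are not adjacent to \pair{C}{5}, so the Black king may be sitting on one of them and wins the moment the White king lands on \pair{B}{6}. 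The same failure occurs for the orthogonal step \pair{C}{5}$\to$\pair{B}{5} with the queen on \pair{B}{6} (the square \pair{A}{4} is unobserved), and diagonal steps are systematically worse because five of the eight neighbours of $K'$ are new. Your sentence claiming that Black moving adjacent to the White king lets White ``reply by capturing'' conflates these two situations; in the fatal one it is Black, not White, who moves next.

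This missing safety check is precisely what the paper's proof is about: it chooses queen routes (e.g.\ queen to \pair{B}{5}, then to \pair{A}{5}, and only then king to \pair{B}{5}) so that the union of the queen's and king's lines of sight covers all eight neighbours of the king's destination before the king moves, and it routes the king to the nearest \emph{edge} first and then along the edge to a corner, where the board boundary shrinks the set of neighbours that must be covered. Your invariant (queen adjacent to king) and the termination argument via decreasing distance are fine as far as they go, but the core content of the lemma is exactly the coverage argument you have omitted, and the single-bridge-square version of it is false.
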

\begin{proof}
	Note that the maximum distance of the king to the nearest edge is at most two:
	
	\paragraph*{Case 1: Distance two from the edge.}  
	Let the king be on \pair{C}{5}, other cases are analogous. We proceed as follows:
	\begin{figure}[]
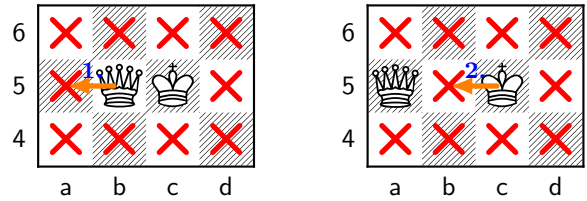

		\centering
		\newchessgame
		\setchessboard{
			printarea=a4-d6,
			showmover=false,
			setpieces={Qb5,Kc5}
		}
		\chessboard[pgfstyle=straightmove,
		linewidth=0.2ex,
		color=orange,
		markmoves={b5-a5},
		pgfstyle=text,
		color=blue,
		text=$\raisebox{1.4ex}{\small \bf 1.}$,
		markregion=b5-a5,
		pgfstyle=cross,
		color=red,
		linewidth=0.1em,
		shortenstart=0.5ex,
		shortenend=0.5ex,
		markfields={a6,b6,c6,d6,a5,a4,b4,c4,d4,d5}]
		\bigskip
		\setchessboard{
			printarea=a4-d6,
			showmover=false,
			setpieces={Qa5,Kc5},
			pgfstyle=straightmove,
			linewidth=0.2ex,
			color=orange,
			markmoves={c5-b5},
			pgfstyle=text,
			color=blue,
			text=$\raisebox{1.4ex}{\small \bf 2.}$,
			markregion=c5-b5,
			pgfstyle=cross,
			color=red,
			linewidth=0.1em,
			shortenstart=0.5ex,
			shortenend=0.5ex,
			markfields={a6,b6,c6,d6,a4,b5,b4,c4,d4,d5}
		}
		\chessboard
		\caption{The squares marked with a red cross are squares where the Black king cannot be. The orange arrow and the blue number indicate a move and its order in the sequence of moves.}
	\end{figure}
	
	\begin{enumerate}
		\setcounter{enumi}{-1}
		\item If the queen is not already on \pair{B}{5}, then move her there. After this, the opponent's king cannot be on any of the squares \pair{A}{4}, \pair{A}{5}, \pair{A}{6}, \pair{B}{4}, or \pair{B}{6}.  
		\item Next, move the queen to \pair{A}{5}. Again, she cannot be captured, as the opponent's king cannot be on any of the squares adjacent to where she moves, and now the opponent's king is also excluded from \pair{B}{4}, \pair{B}{5}, \pair{B}{6}, \pair{A}{4}, \pair{A}{6}, and \pair{C}{4}, \pair{C}{6}.  
		\item Now advance the king to \pair{B}{5}. Afterwards, the king is only one move away from the edge.  
	\end{enumerate}
	
	\paragraph*{Case 2: Distance one from the edge.}
	Once the king is within one step of the edge, he can continue without further assistance from the queen. Eventually, the king reaches the edge itself. Note that the queen cannot be on \pair{A}{5} as this would block the king from moving there, but can be placed on any other adjacent square of the king (such as \pair{A}{6}). After moving the king to the edge, the queen can be moved to an adjacent square of the king within one move. 
	
	\begin{figure}[!ht]
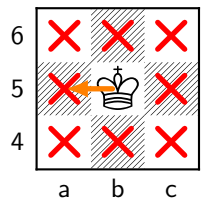

		\centering
		\newchessgame
		\setchessboard{
			printarea=a4-c6,
			showmover=false,
			setpieces={Kb5}
		}
		\chessboard[pgfstyle=straightmove,
		linewidth=0.2ex,
		color=orange,
		markmoves={b5-a5},
		pgfstyle=cross,
		color=red,
		linewidth=0.1em,
		shortenstart=0.5ex,
		shortenend=0.5ex,
		markfields={a4,a5,a6,b4,b6,c4,c5,c6}]
		\caption{The red cross indicates all squares that the Black king cannot be. All squares are covered by the White king, therefore no matter where the queen is placed, these squares are covered. The orange arrow indicates a move.}
	\end{figure}
	
	\paragraph*{Case 3: The king is at the edge}
	From there, we guide the king into a corner. Suppose the king is on \pair{A}{4}, other cases are similar. We move the queen to \pair{B}{3}, enabling the king to step downwards to \pair{A}{3}. Next, move the queen to \pair{B}{2}, after which the king descends to \pair{A}{2} and finally to \pair{A}{1}. In the end, the queen moves to \pair{A}{2}, yielding to the corner configuration.
	\begin{figure}[!ht]
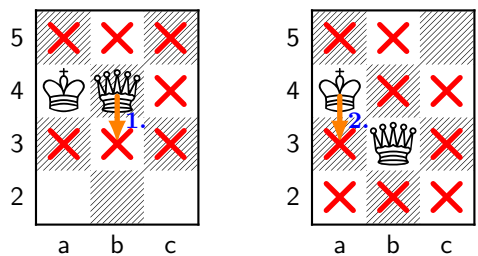

		\centering
		\newchessgame
		\setchessboard{
			printarea=a2-c5,
			showmover=false,
			setpieces={Qb4,Ka4}
		}
		\chessboard[pgfstyle=straightmove,
		linewidth=0.2ex,
		color=orange,
		markmoves={b4-b3},
		pgfstyle=text,
		color=blue,
		text=$\raisebox{-0.9ex}{\small \bf \text{ } \text{ } 1.}$,
		markregion=b4-b3,
		pgfstyle=cross,
		color=red,
		linewidth=0.1em,
		shortenstart=0.5ex,
		shortenend=0.5ex,
		markfields={a5,b5,c5,c4,a3,b3,c3}]
		\bigskip
		\setchessboard{
			printarea=a2-c5,
			showmover=false,
			setpieces={Qb3,Ka4},
			pgfstyle=straightmove,
			linewidth=0.2ex,
			color=orange,
			markmoves={a4-a3},
			pgfstyle=text,
			color=blue,
			text=$\raisebox{-0.9ex}{\small \bf \text{ } \text{ } 2.}$,
			markregion=a4-a3,
			pgfstyle=cross,
			color=red,
			linewidth=0.1em,
			shortenstart=0.5ex,
			shortenend=0.5ex,
			markfields={a5,b5,c4,a3,c3,a2,b2,c2,b4}
		}
		\chessboard
		\caption{The red cross indicates all squares where the Black king cannot be. The orange arrows and the blue numbers indicate moves and their order in the sequence of moves.}
	\end{figure}

\end{proof}
Now, we return to the initial step of moving the king out of the center, if necessary.
We show that it is always possible to move him outside the central squares and position the queen adjacent to him. This allows us to invoke Lemma~\ref{lem:kq_goal}.
\begin{lemma}[Ensuring a Safe Way]
Let the White king be on one of the central squares $\{\text{\pair{D}{4}}, \text{\pair{D}{5}}, \text{\pair{E}{4}}, \text{\pair{E}{5}}\}$, White controls a queen, and the sole remaining Black piece is a king (whose initial position is known). It is White's turn to move.
	
	Without loss of generality, White can either place its king on \pair{D}{5} while ensuring that Black's king is on \pair{D-H}{1-5} before the next move of White, or win.
	\end{lemma}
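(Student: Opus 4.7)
The plan is to reduce to a canonical configuration by board symmetry, and then orchestrate a short sequence of king and queen moves. The first step is the \emph{without loss of generality} reduction. Partition the $8 \times 8$ board into four overlapping $5 \times 5$ quadrants whose inner corners are the four central squares \pair{D}{4}, \pair{D}{5}, \pair{E}{4}, \pair{E}{5}. Every square of the board belongs to at least one of these four quadrants, so Black's known initial square lies in some quadrant. By applying the appropriate horizontal and/or vertical reflection of the board, I may assume this quadrant is \pair{D-H}{1-5} and the corresponding inner corner is \pair{D}{5}.

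With this canonical setup, my strategy is to bring the White king to \pair{D}{5} (if he is not already there) via a single king step, and to place the queen on \pair{C}{6} via at most a few preparatory queen moves. A queen on \pair{C}{6} attacks the entire file $C$ and the entire rank $6$, which together constitute all exit squares out of \pair{D-H}{1-5}. Hence whichever move the Black king plays, either it remains in the quadrant, or it lands on a queen-attacked square and is captured on White's following turn, delivering the ``or win'' alternative of the statement.

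The queen on \pair{C}{6} is itself safe: the only square adjacent to \pair{C}{6} that lies in \pair{D-H}{1-5} is \pair{D}{5} (occupied by the White king), so the Black king cannot reach \pair{C}{6} in a single step without first stepping onto a square adjacent to the White king, at which point White captures. Similarly, the king's single step to \pair{D}{5} is safe provided no possible square of the Black king is adjacent to \pair{D}{5}; otherwise the Black king is within Chebyshev distance one of the White king and White simply captures, winning outright.

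The principal obstacle is the preparatory phase: the queen's starting square is arbitrary, so she may need several moves to reach adjacency with the king and then \pair{C}{6} without being captured along the way. I plan to handle this by tracking, throughout the preparation, the set of squares where the Black king could currently be. This set begins as a singleton and grows by at most a one-square Chebyshev neighborhood with each Black move, so it remains small and controllable. Queen moves are chosen so that their destination is neither in this set nor adjacent to it; once the queen is adjacent to the king, the preceding observation about freely moving the queen around her king completes the construction by placing her on \pair{C}{6} once the king reaches \pair{D}{5}. A short case analysis, in the spirit of the previous lemma's proof, then confirms that this scheduling always terminates at the desired configuration or wins outright.
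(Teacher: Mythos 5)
There is a genuine gap in the confinement step. Your mechanism for keeping the Black king inside \pair{D-H}{1-5} is the queen on \pair{C}{6}, which guards the exit squares (file $C$ and rank $6$); but that guard is only in place \emph{after} the preparatory phase, which by your own account may take several moves when the queen starts far from the king. During those moves Black is not restricted at all: his uncertainty set grows by one Chebyshev step per Black move, and after two Black moves it is already a $5\times 5$ ball that in general fits inside no single quadrant (if Black starts near the $D$/$E$ file boundary, after two moves he may occupy files $B$ through $F$, contained neither in \pair{A-E}{1-5} nor in \pair{D-H}{1-5}). So by the time the queen reaches \pair{C}{6}, the assertion that Black is on \pair{D-H}{1-5} can simply be false, and no re-choice of quadrant repairs it. Tracking the uncertainty set, as you propose, suffices for queen safety but does not give confinement; the remark that the set ``remains small and controllable'' is precisely the claim that needs proof, and it fails.

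The paper's proof avoids this by achieving the confinement within a single White move: with Black's initial square known and the White king central, White steps the \emph{king} one square away from Black's quadrant (e.g.\ to \pair{D}{4} when Black is in \pair{E-H}{6-8}); after Black's single reply he necessarily lies in a region that, after reflecting or rotating the board, is contained in \pair{D-H}{1-5} with the White king on \pair{D}{5}. The queen enters only as a possible obstacle on the king's destination square, in which case the argument branches directly into Lemma~\ref{lem:kq_goal}. No piece ever needs to guard the boundary of the quadrant, because the conclusion is only required to hold at one instant (just before White's next move), after which the following lemma takes over. To salvage your queen-on-\pair{C}{6} idea you would need the queen to reach \pair{C}{6} in at most one move in every case, which is false for a generic queen position.
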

\begin{proof}
	Suppose the White king is on square \pair{D}{5} by symmetry.	

	If the Black king is not already in \pair{D-H}{1-5}, we can move our king and achieve our desired configuration up to symmetry:
	\begin{itemize}
		\item If the Black king is anywhere on the squares \pair{E-H}{6-8}, then
		\begin{itemize}
			\item If the Black king is on \pair{E}{6}, then take him.
			\item Otherwise: try to move the White king to \pair{D}{4} if not blocked by the queen. 
			\begin{itemize}
				\item If the queen is on \pair{D}{4}, then move the king to \pair{C}{4}: the king is not in the center, and the queen is adjacent to him, letting us invoke Lemma~\ref{lem:kq_goal}. 
				\item Otherwise: move the White king to \pair{D}{4}. It is Black's turn to move. If he moves, then he can only move to any square from \pair{D-H}{5-8}. Rotating the board yields the desired result.
			\end{itemize}
		\end{itemize}
		\item If the Black king is on the squares \pair{A-C}{1-4}, this is analogous to the previous case.
		\item If the Black king is on the squares \pair{A-D}{5-8}, then we can use a similar strategy. 
		\begin{itemize}
			\item If the Black king can be taken, White does so.
			\item Otherwise: try to move the White king to \pair{E}{4} if not blocked by the queen. 
			\begin{itemize}
				\item If the queen is on that square, first move to \pair{E}{5} and then to \pair{F}{4}, which can be done without a risk of capture. Again, we can continue with Lemma~\ref{lem:kq_goal}.
				\item Otherwise: move the king to \pair{E}{4}, then it is Black's turn. If he moves, then he can only safely move to any square from \pair{A-E}{4-8} except the squares adjacent to the White king. Rotate the board by 180 degrees, and we have reached the goal.\qedhere
			\end{itemize}
		\end{itemize}
	\end{itemize}

\end{proof}

Given such a configuration, we can safely move the king to any square up-left of the board, as the Black king cannot outpace the White king.

\begin{lemma}[Move away from the center]
	Let the White king be on \pair{D}{5}, White controls a queen, and the Black king is in \pair{D-H}{1-5}.
	It is White's turn to move.
	
	Then White can either reach the corner configuration or win.
\end{lemma}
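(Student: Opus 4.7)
The plan is to exploit the confinement of the Black king: since he is pinned to the $5 \times 5$ region \pair{D-H}{1-5}, the entire upper-left portion of the board is safe for White. The White king can march toward \pair{A}{8}, and the queen can be parked there before being placed adjacent to him, all without risk of capture. The goal is to achieve the hypotheses of Lemma~\ref{lem:kq_goal} (White king off-center and queen on an adjacent square), which then finishes the job.

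First, if the Black king sits on one of the squares adjacent to \pair{D}{5}, White captures it and wins. Otherwise, the Black king is at king-move distance at least $3$ from \pair{C}{6}. I would march the White king along the diagonal \pair{D}{5} $\to$ \pair{C}{6} (and further to \pair{B}{7} or \pair{A}{8} if necessary). After each White king step, the Black king can close the gap by at most one square per turn, so the king-move distance from the Black king to the White king's new square stays at least $2$, and Black cannot capture on its reply. If the queen happens to occupy the intended square, or if she sits on a square where the Black king could take her, I would first relocate her to a safe parking square such as \pair{A}{7}, \pair{A}{8}, \pair{B}{7}, or \pair{B}{8}, none of which Black can reach in one move; since the queen has eight rays from her square and at most two of them are blocked by the two kings, such a parking square is always reachable in at most two queen moves.

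Once the White king has reached a non-central square (already \pair{C}{6} suffices) and is safely separated from the Black king, I would move the queen to a square adjacent to him. The distance argument again provides at least one such adjacent target that is both reachable along an unobstructed ray and not attacked by the Black king. At this point the hypotheses of Lemma~\ref{lem:kq_goal} are met, and invoking it yields either the corner configuration or an immediate win.

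The main obstacle is the case analysis around the queen: I must ensure that for every initial queen position she can be safely shuffled out of the king's path and eventually brought adjacent to the White king on a non-central square. The generous distance gap between the Black king's confined region and the upper-left quadrant provides the essential slack that makes each individual queen and king move safe, so this case analysis reduces to routine verification rather than delicate combinatorics.
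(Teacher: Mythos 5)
Your proposal is correct in substance and rests on the same underlying idea as the paper's proof, namely that Black's confinement to \pair{D-H}{1-5} makes the entire upper-left quadrant safe territory for White's manoeuvring; but the decomposition is genuinely different. The paper branches on the queen's initial position and adapts the king's destination to it (it sends the king to \pair{B}{5} when the queen sits on one of \pair{A}{1}, \pair{B}{1}, \pair{B}{2}, \pair{F}{1}, \pair{F}{2}, \pair{G}{1}, to \pair{C}{6} when she is on \pair{D}{6} or \pair{D}{7}, and to \pair{D}{7} otherwise), precisely so that the queen can join the king in a single move afterwards and no tempi are spent on her beforehand. You instead fix the king's target uniformly (\pair{C}{6} already suffices, as you note) and pay up to two preliminary moves to park the queen near the far corner. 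That buys a cleaner, case-free argument, but it is also where your bookkeeping is loosest: the invariant ``the distance to the king's next square stays at least~$2$'' as you state it only accounts for Black's replies to king moves, not for the extra Black moves granted during the queen-parking phase. It still goes through --- after at most two parking moves Black cannot have reached any square adjacent to \pair{C}{6} without first entering the White king's neighbourhood at \pair{D}{5}, and for the optional continuation to \pair{B}{7} and \pair{A}{8} every dangerous square adjacent to the next destination is also adjacent to the king's current square and hence excluded --- but this needs to be said explicitly rather than absorbed into ``one square per turn.'' Two smaller points deserve a sentence each in a full write-up: the intermediate square of a two-move queen relocation must itself be non-adjacent to every possible \emph{initial} Black position (your ``at most two rays are blocked'' argument gives reachability but not safety of the stopover, though the geometry always provides a safe one via rank~8, rank~7, or the $A$/$B$-files), and the final approach of the queen to a square adjacent to \pair{C}{6} needs no safety check at all, since such a square is protected by the White king (Lemma~1), not because it is unattacked.
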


\begin{proof}
	We now want to move our king, such that the queen can move to an adjacent square of the king on the next move without the risk of being captured. To ensure this, we make a case distinction based on the position of the White queen.
	
		\begin{figure}[]
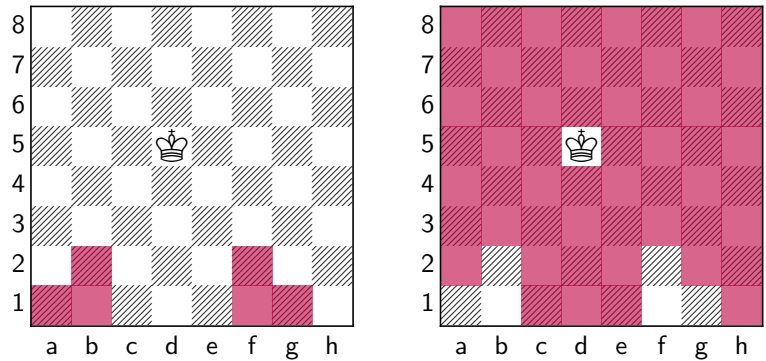

		\centering
		\newchessgame
		\setchessboard{
				showmover=false,
				boardfontsize=15pt,
	setpieces={Kd5}
}
\chessboard[pgfstyle=color,opacity=0.6,color=purple, markfields={a1,b1,b2,f1,f2,g1}]
		\setchessboard{
	showmover=false,
	setpieces={Kd5}
}
\chessboard[pgfstyle=color,opacity=0.6,color=purple, markarea=a2-a8, markarea=b3-b8, markarea=c1-c8, markarea=d1-d4, markarea=d6-d8, markarea=e1-e8, markarea=f3-f8, markarea=g2-g8, markarea=h1-h8]
		\caption{The purple squares indicate where the queen can be in cases 1 and 2 of Lemma 4}
	\end{figure}
	\paragraph*{Case 1: Queen on $\{\text{\pair{A}{1}}, \text{\pair{B}{1}}, \text{\pair{B}{2}}, \text{\pair{F}{1}}, \text{\pair{F}{2}}, \text{\pair{G}{1}}\}$}

	Then White moves the king to \pair{B}{5}. This maneuver is safe because the Black king cannot outpace the White king, as he started in the lower-right part of the board. From there, the queen can reach a square adjacent to the king within one move.
	\paragraph*{Case 2: Let the queen initially be on any other square}
	If the queen is on \pair{D}{6} or \pair{D}{7}, then move the king to \pair{C}{6}, then the queen is adjacent to the king, who is outside the center.
	
	Otherwise, we move the White king to \pair{D}{7}. This maneuver is safe because the Black king cannot outpace the White king, as he started in the lower-right part of the board. Moreover, the queen can then be repositioned onto a square adjacent to the king within one move.

	Note that if the intended path is obstructed by the Black king, then White can capture the king with the queen.

\end{proof}

It remains to consider the case where the king is not in the center of the board, and the queen is not yet adjacent to the king. We must show, that the queen can safely be brought to a square adjacent to her king. Once achieving this, Lemma~\ref{lem:kq_goal} applies.

\begin{lemma}
	Let the White player have a king not on a central square and a queen; the Black player only a king. It is White's turn. The initial position of the Black king is known.
	
	Then either the White queen can be moved to a square adjacent to the White king within at most two moves, without being captured, or White wins.
\end{lemma}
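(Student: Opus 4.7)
The plan is an explicit case analysis on the queen's position $Q$ relative to the White king $K$, using two recurring ideas. First, any square adjacent to $K$ is ``safe'' for the queen: if Black captures her there, his king ends up next to $K$ and White takes it on the reply, so the lemma's disjunction holds whether or not the queen survives. Second, whenever the queen approaches $K$ along one of $K$'s four lines (rank, file, or diagonal), any piece strictly between her and $K$'s adjacent square on that line is necessarily the Black king, whose capture wins. Throughout, if the queen has a direct capture of the Black king, she takes it and we are done.

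\emph{One-move case.} The queen can reach some neighbour $S$ of $K$ in a single move iff $Q$ shares a rank, file, or diagonal with $S$ and no blocker sits strictly between. The only such blocker can be the Black king, who is then captured. Hence the one-move reduction succeeds in all configurations except those in which $Q$ lies on no common line with any of the (up to eight) neighbours of $K$---a geometrically restricted situation, since the neighbours of $K$ collectively determine many lines covering most of the board.

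\emph{Two-move case.} In the remaining positions I would select a staging square $X$ satisfying (a) $Q \to X$ is a legal queen move (any interior obstruction being the Black king, which is captured), (b) $X$ lies on one of the four lines through $K$ itself, and (c) $X$ is at Chebyshev distance at least $2$ from the known Black king square $B$, or else $X$ is already adjacent to $K$ so that safety applies. Then White plays $Q \to X$; after Black's reply the queen moves along the line $XK$ towards $K$, and either reaches the adjacent square $S$ on her side of $K$ (a safe landing) or captures the Black king if he has stepped onto the segment $XS$ (a winning capture). Because $K$ is non-central, at least two of the four lines through $K$ extend deep into the board, so the four lines through $Q$ produce many intersection points usable as $X$.

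\emph{Main obstacle.} The delicate point is choosing $X$ to satisfy (c) uniformly: $B$ may sit precisely on the most convenient line through $K$, forcing us onto a different one. I would handle this by a short sub-case split on the quadrant of $K$ (corner, non-corner edge, or one square inside the edge) and on the approximate location of $B$, and in each sub-case select a line through $K$ that points into the larger empty side of the board, away from $B$; non-centrality of $K$ ensures such a line always has length at least several squares, yielding an intersection with some line through $Q$ that is simultaneously reachable and far enough from $B$. This case analysis, while routine, is where the bulk of the verification lies.
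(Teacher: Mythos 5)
Your overall approach is the same as the paper's: dispose of the one-move case directly, and otherwise stage the queen via an intermediate square from which she descends to a king-adjacent (hence protected) square, using the known position of the Black king to select a waypoint he cannot attack on his reply. However, the proposal has a gap exactly where you place the ``main obstacle'': the existence of a staging square $X$ satisfying your conditions (a)--(c) is asserted, to be established later by a ``routine'' case split over the quadrant of $K$ and the location of $B$, but no such $X$ is ever exhibited, and it is not obvious a priori that (a) and (c) can always be reconciled --- the Black king may sit adjacent to the most natural intersection points of the lines through $Q$ with the lines through $K$. That existence claim \emph{is} the content of the lemma in the hard case, so as written the proof is incomplete.

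The paper closes this step with a two-candidate argument that is much shorter than the analysis you anticipate, and which you could adopt directly. Put $K=(0,0)$ and, after rotating so that the queen lies up-right of the king, $Q=(a,b)$ with $a,b>1$ (possible precisely because the one-move case has been excluded). Consider only the two L-shaped rook routes
\[
(a,b)\to(1,b)\to(1,1)\qquad\text{and}\qquad(a,b)\to(a,1)\to(1,1);
\]
any blocker on either leg must be the Black king, whose capture wins. If both waypoints $(1,b)$ and $(a,1)$ were adjacent to the Black king, his position would be forced into the immediate vicinity of $(2,2)$, which in turn forces the queen onto a square adjacent to him (one of $(2,3),(3,2),(3,3)$ in the generic case), so she captures at once. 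Hence at least one waypoint is safe from capture on Black's reply, and the second leg ends on the protected square $(1,1)$. No quadrant analysis or ``larger empty side of the board'' argument is required; note also that non-centrality of the White king, which your sketch leans on, plays no role in this step (it is needed only so that the subsequent lemma can be invoked).
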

\begin{proof} 
	 If the opponent's king can be captured by the queen (or by the king), White captures and wins.
	 If there exists an adjacent square to the White king that the queen can reach in one move, White plays that move, and the condition is fulfilled in a single move.
	
	Otherwise, assume without loss of generality (by rotation) that the queen is located in the upper-right part relative to the king. Let the White king be at $K=(0,0)$ and the queen at $Q=(a,b)$ with $a,b>1$. Choose as a target the square $T=(1,1)$ (up-right from the king).
	
	\begin{figure}[]
		\centering
		\begin{tikzpicture}[scale=0.6]
		% Define board dimensions
		\def\aval{7}  % a coordinate value
		\def\bval{6}  % b coordinate value
		
		% Draw grid
		\foreach \x in {0,1,2,3,4,7,8} {
			\draw[black, thick] (\x,0) -- (\x,7);
		}
		\foreach \y in {0,1,2,3,4,6,7} {
			\draw[black, thick] (0,\y) -- (8,\y);
		}

		% Coordinate labels on x-axis
		\node[black] at (0.5, -0.3) {0};
		\node[black] at (1.5, -0.3) {1};
		\node[black] at (2.5, -0.3) {2};
		\node[black] at (3.5, -0.3) {3};
		\node[black] at (6, -0.3) {...};
		\node[black] at (7.5, -0.3) {a};
		
		% Coordinate labels on y-axis
		\node[black] at (-0.3, 0.5) {0};
		\node[black] at (-0.3, 1.5) {1};
		\node[black] at (-0.3, 2.5) {2};
		\node[black] at (-0.3, 3.5) {3};
		\node[black] at (-0.3, 4.75) {.};
		\node[black] at (-0.3, 5) {.};
		\node[black] at (-0.3, 5.25) {.};
		\node[black] at (-0.3, 6.5) {b};
		
		% Chess pieces - using chessboard package symbols
		% White king at (0,0)
		\node[font=\Huge] at (0.5, 0.5) {\WhiteKingOnWhite};
		
		% Black king at (2,2)
		\node[font=\Huge] at (2.5, 2.5) {\BlackKingOnWhite};
		
		% Queen at (a,b)
		\node[font=\Huge] at (7.5, 6.5) {\WhiteQueenOnWhite};
		
		% Red T on square (1,1)
		\node[red, font=\Large\bfseries] at (1.5, 1.5) {T};
		
		% Route 1 (purple) - from (a,b) to (1,b) to (1,1)
		\draw[purple, very thick, ->] (7.2, 6.5) -- (1.8, 6.5);
		\node[purple, font=\small] at (4.7, 6.8) {1.};
		
		\draw[purple, very thick, ->] (1.5, 6.2) -- (1.5, 1.8);
		\node[purple, font=\small] at (1.2, 4.5) {2.};
		
		% Route 2 (blue) - from (a,b) to (a,1) to (1,1)
		\draw[blue, very thick, ->] (7.5, 6.2) -- (7.5, 1.8);
		\node[blue, font=\small] at (7.8, 4.5) {1.};
		
		\draw[blue, very thick, ->] (7.2, 1.5) -- (1.8, 1.5);
		\node[blue, font=\small] at (4.5, 1.2) {2.};
		
		% Legend
		\node[black, font=\normalsize] at (9.5, 6.5) {Legend:};
		
		% Route 1 legend entry
		\fill[purple, opacity=0.8] (9, 5.9) rectangle (9.2, 6.1);
		\draw[black] (9, 5.9) rectangle (9.2, 6.1);
		\node[black, font=\small] at (9.3, 6) {:};
		\node[black, font=\small] at (10.2, 6) {Route 1};
		
		% Route 2 legend entry  
		\fill[blue, opacity=0.8] (9, 5.4) rectangle (9.2, 5.6);
		\draw[black] (9, 5.4) rectangle (9.2, 5.6);
		\node[black, font=\small] at (9.3, 5.5) {:};
		\node[black, font=\small] at (10.2, 5.5) {Route 2};
		
		\end{tikzpicture}
		\caption{The illustration shows the White king at $K=(0,0)$, the Black king at $(2,2)$, and the White queen at $Q=(a,b)$ with $a,b>1$. The target square $T=(1,1)$ is marked in red. Two possible routes for the queen to reach $T$ are shown: Route 1 in purple and Route 2 in blue.}
	\end{figure}
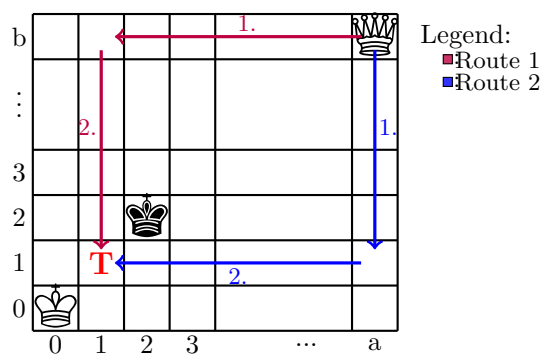
	
	From $Q=(a,b)$ to $T=(1,1)$ there are (at least) two possible routes, each consisting of a horizontal and a vertical (rook-like) move:
	\[
	(a,b)\to (1,b)\to (1,1)\quad\text{or}\quad (a,b)\to (a,1)\to (1,1).
	\]
	Since the position of the Black king is known, White can choose the route whose first waypoint is not adjacent to the Black king. If both waypoints $(1,b)$ and $(a,1)$ were adjacent to the Black king, then the Black king would necessarily be at $(2,2)$, thereby controlling the squares $(1,2),(1,3),(2,1),(3,1)$. In this case, the queen would have to be located at one of $(2,3),(3,2),(3,3)$ in order for both waypoints to be threatened. But these positions are themselves adjacent to $(2,2)$, where the queen can immediately capture the Black king.
\end{proof}

\begin{corollary}
	Starting from any position in which White controls a king and queen and moves first against a lone Black king, and White knows the Black king's initial position, it is always possible (in finitely many moves) to reach the corner configuration or to win. 
\end{corollary}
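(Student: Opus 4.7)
The plan is to chain the four preceding lemmas together via a case analysis on whether the White king starts on a central square. Both branches terminate with an invocation of Lemma~\ref{lem:kq_goal}, which yields either the corner configuration or an outright win in finitely many moves.

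First I would handle the case in which the White king starts on one of the central squares \pair{D}{4}, \pair{D}{5}, \pair{E}{4}, \pair{E}{5}. Applying the \emph{Ensuring a Safe Way} lemma, White either wins immediately, or (after choosing the appropriate symmetry of the board) reaches a configuration in which the White king sits on \pair{D}{5} and the Black king is confined to \pair{D-H}{1-5}. From that configuration the \emph{Move away from the center} lemma produces either the corner configuration or a win, completing this branch.

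Next I would handle the case in which the White king is not on a central square. If the queen is already adjacent to the White king, Lemma~\ref{lem:kq_goal} applies directly. Otherwise, by the last lemma the queen can be safely repositioned to a square adjacent to the White king within at most two moves (or White wins along the way), and Lemma~\ref{lem:kq_goal} then again delivers either the corner configuration or a win.

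The main point of caution is bookkeeping between the stages: I must verify that the output configuration of each lemma matches the preconditions of the next, that it is still White's turn when the next lemma is invoked, and that the various symmetries (rotations/reflections) are applied consistently so that the hypotheses about the Black king's region carry over. Once these hooks are checked, no further combinatorial work is required, and finiteness follows because each invoked lemma terminates in finitely many moves.
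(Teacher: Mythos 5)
Your proposal is correct and matches the paper's (implicit) argument exactly: the corollary is just the composition of the four preceding lemmas, split on whether the White king starts on a central square, with each branch funneling into Lemma~\ref{lem:kq_goal}. The only cosmetic remark is that in the central-square branch the \emph{Move away from the center} lemma already concludes with the corner configuration or a win, so no separate final invocation of Lemma~\ref{lem:kq_goal} is needed there.
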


\subsection{Stage 2: Push the Black king to the edge, line by line}

We now show that it is possible to systematically reduce the cardinality of the set of squares available to the lone Black king. The reduction proceeds line by line (either rank or file), beginning with line 1 and 2. For all proofs in this part, we can assume that White starts, as black can start on an arbitrary position (within the given restrictions).
\begin{lemma}
	 White controls a king and a queen in a corner configuration, Black has a king. 
	
	Then Black can be restricted to a $6 \times 8$ square area, or White wins.
\end{lemma}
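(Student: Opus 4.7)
The plan is to show that, in the corner configuration, the queen's lines of attack already force the Black king's possible positions into two disjoint regions: the 6$\times$8 rectangle \pair{A-H}{3-8}, and a short segment on rank 1. In the first sub-case the restriction is essentially immediate; in the second, White wins.

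Without loss of generality take the corner configuration to be $K=\pair{A}{1}$, $Q=\pair{B}{2}$. The queen from \pair{B}{2} attacks all of rank 2, all of file B, and both diagonals through \pair{B}{2}; in particular on rank 1 she attacks \pair{C}{1}, while \pair{A}{1} and \pair{B}{1} are blocked or attacked by the king. Thus the Black king's possible positions lie either on ranks 3--8 or on the segment \pair{D-H}{1}. By Lemma~1, White can keep the queen permanently on squares adjacent to the king (in particular cycling between \pair{A}{2} and \pair{B}{2}, both of which attack all of rank 2); hence rank 2 is attacked on every White turn, and the Black king can never cross between rank 1 and ranks 3--8.

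In the first sub-case the restriction is therefore immediate: Black is trapped on ranks 3--8, i.e.\ on the 6$\times$8 rectangle \pair{A-H}{3-8}. In the second sub-case, Black is trapped on the rank-1 segment \pair{D-H}{1}, and the plan is a pincer. White leaves the queen on \pair{B}{2} (still protected by the king) and advances the king along rank 1: first \pair{A}{1} to \pair{B}{1}, which is safe because no square in \pair{D-H}{1} is adjacent to \pair{B}{1}; then, using brief queen relocations (e.g.\ to squares on rank 2 or rank 1 chosen so that the king still protects her from capture) that project additional attacks onto successive rank-1 squares, the king steps to \pair{C}{1}, \pair{D}{1}, and so on. At each step Black's safe interval on rank 1 strictly shrinks, since any square adjacent to the White king or attacked by the queen becomes forbidden to Black; eventually Black has no legal move at all and is captured on White's next turn.

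The main obstacle I expect is the bookkeeping inside the rank-1 pincer: every queen move off her safe squares in $\{\pair{A}{2}, \pair{B}{2}, \pair{B}{1}\}$ must be shown not to expose her to capture by a Black king whose exact position is only known to lie in a shrinking interval on rank 1, and every king advance must land on a square attacked by no Black possibility. Since that possible set is a short interval on a single rank, it is a finite (if slightly tedious) case analysis, relying throughout on Lemma~1 to guarantee that the queen stays uncaptured. Either way the conclusion is reached: Black is confined to the 6$\times$8 rectangle of ranks 3--8, or White captures the Black king.
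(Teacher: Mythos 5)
Your reduction of Black's possible initial squares to (ranks 3--8) $\cup$ \pair{D-H}{1} is correct, but the proof then splits into ``sub-case 1'' and ``sub-case 2'' as if White could tell which one holds. In Fog of War, White cannot: the possible-position set is the \emph{union} of the two regions, so White needs a single move sequence that simultaneously (a) keeps rank 2 attacked on every turn so a Black king on ranks 3--8 can never descend, and (b) shrinks and eventually eliminates the rank-1 interval. Your pincer does not guarantee (a): you explicitly allow ``brief queen relocations \dots to squares on rank 2 \emph{or rank 1}'', and any turn on which the queen sits on rank 1 leaves most of rank 2 unattacked, letting a Black king on rank 3 slip down and destroy the invariant your first paragraph established. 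Moreover, the part you defer as ``bookkeeping I expect'' -- exhibiting concrete queen and king moves that are safe against every position in the current possible set -- is really the entire content of the lemma, so the proof as written has a genuine hole rather than a routine omission.

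For comparison, the paper's proof is a single uniform sweep that needs no case split on Black's location: starting from king on \pair{A}{1} and queen on \pair{A}{2}, White alternately moves the queen one square right along rank 2 (each new queen square is adjacent to, hence protected by, the White king one square behind on rank 1) and then the king one square right along rank 1 (every square adjacent to the king's destination is covered by the queen standing diagonally above), repeating until the queen reaches \pair{H}{2}. Throughout, rank 2 is attacked, so a Black king on ranks 3--8 stays there; a Black king on rank 1 is either seen and captured or pushed into \pair{H}{1}, where every move loses. Your argument is salvageable, but only by committing to exactly such a uniform sequence in which the queen never leaves rank 2 and verifying each step's safety against the whole possible set.
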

\begin{proof}
Let the White king be on \pair{A}{1} and the queen on \pair{A}{2}. Use the following strategy:
	
	\begin{enumerate}
		\item Move the queen one square to the right (first to \pair{B}{2}). She is protected by the White king, so if she is taken, White wins.
		\item Move the White king to the right. This move is safe as all adjacent positions are visible due to the queen on the position above.
		\item Repeat steps (1.), (2.) until the queen reaches \pair{H}{2}.
	\end{enumerate}
	This procedure works for the following reasons:
	\begin{itemize}
		\item If the Black king is on line 2, \pair{A-H}{2}, then the queen can take him and White wins.
		\item If the Black king is on line 1, \pair{A-H}{1}, then the queen will discover him. If the Black king moves further to the right, at some point, he is stuck and can be taken by the queen. If he takes the queen, then the White king takes and White wins.
	\end{itemize}
	
Afterwards, the Black king is not on the first or second rank, namely \pair{A-H}{1-2}.
\end{proof}

We now continue by reducing the number of possible squares available to the Black's king, line by line, until he must be placed on the last three lines. For this, let the Black king be located on an arbitrary position within \pair{A-H}{5-8}. 

 First, we move the queen and the king one step at a time to the right, until the queen sees the Black king. If the queen sees the Black king, the king can be forced upwards:

\begin{lemma}
	Let the White player have a king and a queen, and the Black player only a king. The White queen is adjacent to her own king. The Black king is located in a $6 \times 8$ square area, which is restricted by the queen. 
	
	Then White can either win the game or force the Black king upwards until restricting him to a $3 \times 8$ square field.
\end{lemma}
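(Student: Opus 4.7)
The plan is to iterate, three times, a line-sweep identical in spirit to the sweep of the previous lemma, but performed one rank higher each time. After the previous lemma the queen occupies a rank-2 square (say \pair{H}{2}) with the king adjacent, and Black is confined to ranks~3--8. After three iterations the queen will stand on rank~5 with the king adjacent on rank~4, Black will be confined to ranks~6--8, and the $3 \times 8$ restriction will be established.

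Each iteration consists of a short repositioning phase followed by a sweep phase. The sweep phase is the same as in the previous lemma with all ranks shifted up: the queen traverses the new rank file by file while the king shadows her one rank below, and the correctness argument is reused unchanged---the queen remains adjacent to the king (so any Black capture of the queen is immediately met by recapture with the king, yielding a win), the king only ever steps to squares whose rank-above neighbours are visible to the queen, and once the queen has reached the far file every square of the swept rank has been scanned, so Black cannot still be on it.

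The main technical obstacle is the repositioning between sweeps: we must move the king--queen pair from the (rank $r-1$, rank $r$) formation at one end of the board to the (rank $r$, rank $r+1$) formation at the opposite end, without ever leaving the king exposed to capture. I would describe this as a short scripted sequence of alternating moves in which (a)~the queen shuffles along rank $r$ while staying king-adjacent, using the queen-around-king manoeuvre of Lemma~1 (this is safe because rank $r$ has just been cleared by the previous sweep), (b)~the king then steps up to rank $r$ under the queen's cover, and (c)~the queen advances to a square adjacent to the king on rank $r+1$. Each individual step must be verified to preserve king--queen adjacency and to keep the king's neighbourhood free of any square that could still contain a Black king consistent with White's current observations; these verifications rely only on the queen's long sightlines along ranks, files and diagonals from her successive positions.

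After three iterations Black's possible region is reduced to ranks~6--8 and the king--queen formation has been promoted to (rank~4, rank~5), yielding the claimed $3 \times 8$ restriction. If at any intermediate moment the Black king becomes visible to a White piece, White captures it immediately and wins.
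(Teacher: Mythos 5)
Your high-level plan---advance the king--queen pair one rank at a time until Black is squeezed into three ranks---matches the paper's decomposition (the paper likewise proves one representative rank-reduction and declares the other instances identical). The gap is in the mechanism of a single advance. In your step (c) the queen moves from rank $r$ up onto rank $r+1$, which is the \emph{bottom rank of Black's current region}, and in doing so she \emph{vacates} rank $r$, the fence that was keeping Black above. Concretely, in the first iteration Black is confined to ranks 3--8 and may sit on, say, \pair{C}{3}; when the queen steps from \pair{G}{2} to \pair{G}{3} she does reveal him, but it is then Black's move, and from \pair{C}{3} he can step down to \pair{B}{2}, \pair{C}{2} or \pair{D}{2}, none of which is covered by the queen on \pair{G}{3} or by the king at the far end of rank 2. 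Black has escaped below the fence, the confinement invariant is destroyed, and nothing in your argument recovers it. The claim that the base sweep's ``correctness argument is reused unchanged'' is exactly where this breaks: in the base case a Black king trapped on rank 1 has no rank 0 to flee into, so the rightward traversal corners him against \pair{H}{1}; on an interior rank he simply drops into the re-opened half of the board. (A smaller issue: your safety criterion for the king's step up in (b)---all rank-above neighbours of his destination visible to the queen---is only satisfiable at the board edge, since a queen adjacent on the same rank covers just two of the three such neighbours; you implicitly rely on doing the turnaround at the edge, but this needs to be said and checked.)

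The paper closes precisely this gap with a localization step your proposal lacks. The queen never enters Black's region blind: she sweeps rightward along the already-cleared fence rank (where she cannot meet the Black king), watching the file above her, with the king shadowing one rank below. Only when she actually \emph{sights} the Black king on her file---which, after his forced reply, pins him to a few known squares at least two ranks above her---does White advance: the king steps diagonally up beside the queen, and only then does the queen step onto the next rank; her file and diagonal coverage from the pre-advance square guarantees Black cannot reach the transiently uncovered squares during the intervening Black moves. If the sweep ends without a sighting, Black is instead pinned into the rightmost three files, which is also a $3 \times 8$ region. Some such sight-then-advance argument (or another way of excluding a downward escape at the moment the fence rank is vacated) is indispensable for your iteration to go through.
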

We show the procedure from rank 5 and above to 6 and above; the other cases are identical.
\begin{proof}
	Suppose that the White queen is located on position \pair{A}{4} and her king is on \pair{B}{3}, and the Black king is located on an arbitrary square \pair{A-H}{5-8}. Other configurations are similar.
	
	The strategy proceeds in two phases:
	\begin{itemize}
		\item First, we move the queen and the king to the right, one step at a time, until the queen sees the Black king above her.
		\item Once the queen sees the Black king, we force him upwards, ultimately restricting him to a $3 \times 8$ square area, namely \pair{A-H}{6-8}.
	\end{itemize}

	\begin{figure}[]
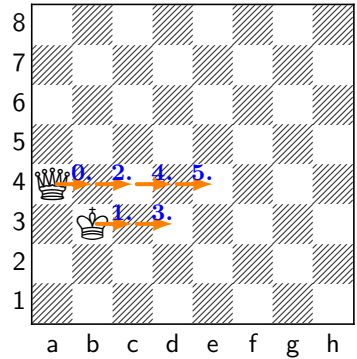

		\centering
		\newchessgame
		\setchessboard{
			showmover=false,
			boardfontsize=15pt,
			setpieces={Qa4,Kb3},
			pgfstyle=straightmove,
			linewidth=0.2ex,
			color=orange,
			markmoves={a4-b4, b3-c3, b4-c4, c3-d3, c4-d4, d4-e4},
			pgfstyle=text,
			color=blue,
			text=$\raisebox{1.4ex}{\small \bf \text{ }\text{ }0.}$,
			markregion=a4-b4,
			text=$\raisebox{1.4ex}{\small \bf \text{ }\text{ }1.}$,
			markregion=b3-c3,
			text=$\raisebox{1.4ex}{\small \bf \text{ }\text{ }2.}$,
			markregion=b4-c4,
			text=$\raisebox{1.4ex}{\small \bf \text{ }\text{ }3.}$,
			markregion=c3-d3,
			text=$\raisebox{1.4ex}{\small \bf \text{ }\text{ }4.}$,
			markregion=c4-d4,
			text=$\raisebox{1.4ex}{\small \bf \text{ }\text{ }5.}$,
			markregion=d4-e4,
			text=$\raisebox{1.4ex}{\small \bf \text{ }\text{ }6.}$}
		\chessboard
		\caption{The orange arrows and the blue numbers indicate moves and their order in the sequence of moves. It is a fixed sequence of moves, which is only interrupted if the queen sees the Black king.}
		\label{fig:chess3}
	\end{figure}
	It is White's turn to move. White follows the strategy below:
	\begin{enumerate}
		\item If White sees the Black king, then he captures immediately.  
		\item Otherwise, if the Black king is unseen, White follows a fixed sequence of moves as illustrated in Figure~\ref{fig:chess3}. 
		
		 \item If for $i=0, i=2$ or $i=4$ we see the Black king in the same rank as the queen, we move the king diagonally up-right, then follow with the queen one square up.  
		 \item If for $i=5$ we see the Black king in the same rank as the queen, we move the king diagonally up-left, then follow with the queen one square up.  
%		\end{itemize}
	\end{enumerate}
	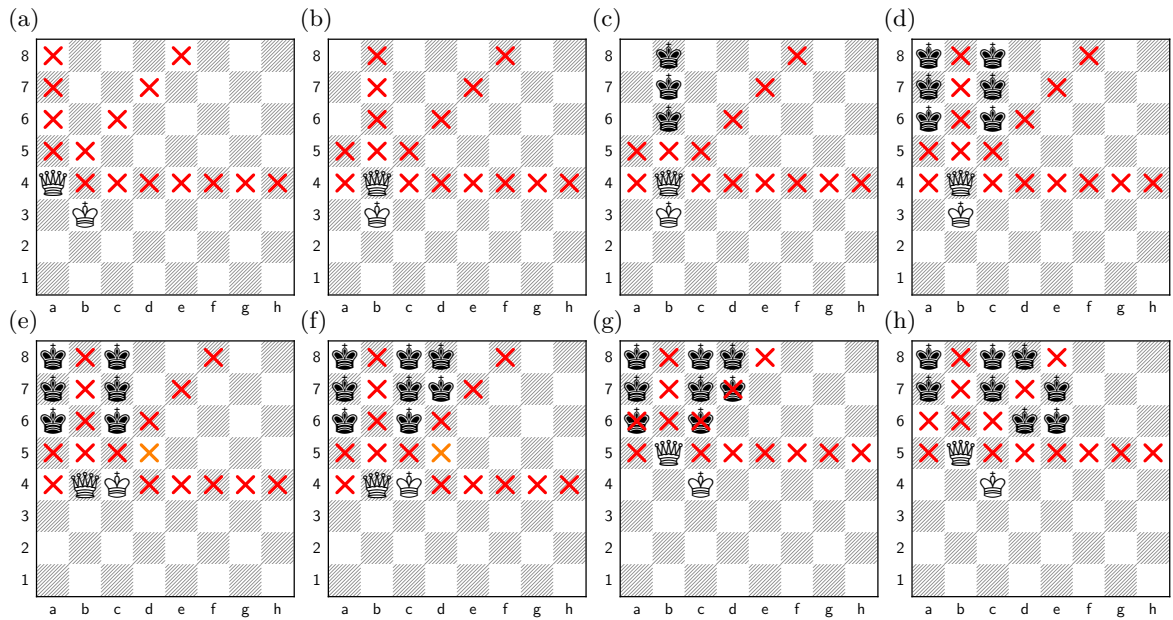
\begin{figure}[!ht]
		\centering
		\newcommand{\boardscale}{0.6}
		
		\tikzset{panel label/.style={font=\small, anchor=north west}}
		
		\begin{tikzpicture}[x=1cm,y=1cm,scale=0.8]
		
		% ========= Row 1 =========
		% (a)
		\begin{scope}[shift={(0,0)}]
		\node (A) {\scalebox{\boardscale}{%
				\chessboard[
				showmover=false,
				setpieces={Qa4,Kb3},
				pgfstyle=cross,
				color=red,
				linewidth=0.1em,
				shortenstart=0.5ex,
				shortenend=0.5ex,
				markfields={a8,a7,a6,a5,b5,c6,d7,e8,b4,c4,d4,e4,f4,g4,h4}
				]}};
		\node[panel label] at ($(A.north west)+(0.05,0)$) {(a)};
		\end{scope}
		
		% (b)
		\begin{scope}[shift={(4.8,0)}]
		\node (B) {\scalebox{\boardscale}{%
				\chessboard[
				showmover=false,
				setpieces={Qb4,Kb3},
				pgfstyle=cross,				color=red,
				linewidth=0.1em,
				shortenstart=0.5ex,
				shortenend=0.5ex,
				markfields={a4,c4,d4,e4,f4,g4,h4,a5,b5,b6,b7,b8,c5,d6,e7,f8}
				]}};
		\node[panel label] at ($(B.north west)+(0.05,0)$) {(b)};
		\end{scope}
		
		% (c)
		\begin{scope}[shift={(9.6,0)}]
		\node (C) {\scalebox{\boardscale}{%
				\chessboard[
				showmover=false,
				setpieces={Qb4,Kb3,kb6,kb7,kb8},
				pgfstyle=cross,
				color=red,
				linewidth=0.1em,
				shortenstart=0.5ex,
				shortenend=0.5ex,
				markfields={a4,c4,d4,e4,f4,g4,h4,a5,b5,c5,d6,e7,f8}
				]}};
		\node[panel label] at ($(C.north west)+(0.05,0)$) {(c)};
		\end{scope}
		
		% ========= Row 2 =========
		% (d)
		\begin{scope}[shift={(14.4,0)}]
		\node (D) {\scalebox{\boardscale}{%
				\chessboard[
				showmover=false,
				setpieces={Qb4,Kb3,ka6,ka7,ka8,kc6,kc7,kc8},
				pgfstyle=cross,
				color=red,
				linewidth=0.1em,
				shortenstart=0.5ex,
				shortenend=0.5ex,
				markfields={a4,c4,d4,e4,f4,g4,h4,a5,b5,c5,d6,e7,f8,b6,b7,b8}
				]}};
		\node[panel label] at ($(D.north west)+(0.05,0)$) {(d)};
		\end{scope}
		
		% (e) with overlay
		\begin{scope}[shift={(0,-5)}]
		\node (Ebase) {\scalebox{\boardscale}{%
				\chessboard[
				showmover=false,
				setpieces={Qb4,Kc4,ka6,ka7,ka8,kc6,kc7,kc8},
				pgfstyle=cross,
				color=red,
				linewidth=0.1em,
				shortenstart=0.5ex,
				shortenend=0.5ex,
				markfields={a4,d4,e4,f4,g4,h4,a5,b5,c5,d6,e7,f8,b6,b7,b8},
				color=orange,markfields={d5}
				]}};
		\node[panel label] at ($(Ebase.north west)+(0.05,0)$) {(e)};
		\end{scope}
		
		% (f) with overlay
		\begin{scope}[shift={(4.8,-5)}]
		\node (Fbase) {\scalebox{\boardscale}{%
				\chessboard[
				showmover=false,
				setpieces={Qb4,Kc4,ka6,ka7,ka8,kc6,kc7,kc8,kd7,kd8},
				pgfstyle=cross,
				color=red,
				linewidth=0.1em,
				shortenstart=0.5ex,
				shortenend=0.5ex,
				markfields={a4,d4,e4,f4,g4,h4,a5,b5,c5,d6,e7,f8,b6,b7,b8},
				color=orange,markfields={d5}
				]}};
		\node[panel label] at ($(Fbase.north west)+(0.05,0)$) {(f)};
		\end{scope}
		
		% ========= Row 3 =========
		% (g)
		\begin{scope}[shift={(9.6,-5)}]
		\node (G) {\scalebox{\boardscale}{%
				\chessboard[
				showmover=false,
				setpieces={Qb5,Kc4,ka6,ka7,ka8,kc6,kc7,kc8,kd7,kd8},
				pgfstyle=cross,
				color=red,
				linewidth=0.1em,
				shortenstart=0.5ex,
				shortenend=0.5ex,
				markfields={b6,b7,b8,a5,a6,c5,c6,d7,e8,d5,e5,f5,g5,h5}
				]}};
		\node[panel label] at ($(G.north west)+(0.05,0)$) {(g)};
		\end{scope}
		
		% (h)
		\begin{scope}[shift={(14.4,-5)}]
		\node (H) {\scalebox{\boardscale}{%
				\chessboard[
				showmover=false,
				setpieces={Qb5,Kc4,ka7,ka8,kc7,kc8,kd8,kd6,ke6,ke7},
				pgfstyle=cross,
				color=red,
				linewidth=0.1em,
				shortenstart=0.5ex,
				shortenend=0.5ex,
				markfields={b6,b7,b8,a5,a6,c5,c6,d7,e8,d5,e5,f5,g5,h5}
				]}};
		\node[panel label] at ($(H.north west)+(0.05,0)$) {(h)};
		\end{scope}
		
		\end{tikzpicture}
		
		\caption{The illustration shows an example sequence of moves according to the described strategy. The Black king is restricted to three ranks (6 to 8) in the end. The red crosses indicate the important squares covered by the White queen. The orange square indicates a particularly important square that is covered by the White king. There are multiple possible positions for the Black king; they are indicated by multiple Black kings. In (a), the initial position is shown, it is White's turn. In (b), the queen moved to \pair{B}{4}. In (c), the queen sees the Black king, which is on \pair{B}{6}, \pair{B}{7} or \pair{B}{8}. Black moves its king. In (e), the White king moved to \pair{C}{4}. In (g), the queen moved to \pair{B}{5}. In (h), the Black king is restricted to \pair{A-H}{6-8}.}
		\label{fig:chess-seq}
	\end{figure}
	
	\medskip
	\noindent
	\textbf{Correctness.}  
	We now prove the reduction of the Black king's possible positions:  
	\begin{itemize}
		\item In the trivial cases (Black king is directly visible, or captures the queen), White either wins by immediate capture or recaptures with the king (the queen is always protected).  

		\item At $i=2$, we consider again where the Black king can move, after the White queen saw the king. He can only move to \pair{B}{8}, \pair{B}{7}, \pair{B}{6} or \pair{D}{8}, \pair{D}{7}, \pair{D}{6}. As we move the White king, the Black king can go to \pair{A}{5}, \pair{A}{7}, \pair{A}{8}, \pair{E}{7}, \pair{E}{8} and the former squares (not \pair{E}{5}, as this is blocked by the White king). White moves the queen one square down, and it is covered by the Black king. If the Black king is on \pair{A}{5} (\pair{B}{5} respectively), then he has only two  files left. Otherwise, he has three ranks left. The cases $i=0$ and $i=4$ work identically, they also reduce the files to at most three.
		\item The case $i=5$ to $i=4$ is identical up to symmetry; but needs to be dealt mirrored as then the three files to the right side can be guaranteed.
		\item If the Black king was not seen at least at $i=5$, then he must be on the rightmost three files, and we restrict his field to $3 \times 8$ with the queen.\qedhere
	\end{itemize}
\end{proof}
In the following, we will show how to reduce the number of lines from three to two:
\begin{lemma}\label{queen3}
	Let the Black king be located on an arbitrary square within a $3 \times 8$ square area. The White player has a king and a queen, and the Black player only a king. The White queen is adjacent to her own king and restricts the Black king to the $3 \times 8$ field. 
	
	Then White can either win the game or restrict the Black king to $2 \times 8$ squares.
\end{lemma}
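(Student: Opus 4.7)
The plan is to apply the strategy of the preceding lemma one rank higher, on the three-rank slab (say \pair{A-H}{6-8}) to which the Black king is now confined. After that lemma the queen sits on the rank just below the slab with the White king adjacent one further rank down; I would first \emph{lift} this pair upward by one rank into a configuration with the queen on rank 6 and the king adjacent on rank 5, and then run the very same left-to-right scanning sweep to push Black onto ranks 7 and 8.

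For the lift I would proceed in two substeps. First, reposition the queen (possibly in two moves, using Lemma~2) so that after the king advances she will still lie adjacent to him on the intended rank~6 square. Second, advance the king from rank 4 to rank 5 via the file freed by the queen; this step is safe because Black is confined to ranks 6--8 at the start of White's turn, so the target rank 5 square is not currently attacked, and any Black king adjacent to the new king square would have to lie on rank 6 on a square already visible to the queen from rank 5, so White would have captured on the previous move instead of reaching this step. By the same visibility argument, the queen's final hop onto rank 6 into a square adjacent to her king is also safe.

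Once we are in this elevated corner-like configuration, the scan-and-push sweep is identical to the one used in the preceding lemma, just with every rank label incremented by one. Each time the queen sees the Black king on her own rank she captures; each time she sees him on the rank above, the diagonal push-up step confines him a file further to the right and one rank higher. At the end of the sweep the queen controls all of rank 6, so Black is restricted to \pair{A-H}{7-8}, that is, to a $2 \times 8$ slab.

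The main obstacle is not the sweep, which is a relabeling of the previous lemma, but the safety of the one-rank lift. Both substeps rely on the two standing observations of Stage~2: Lemma~2 lets the queen move around the king without ever being capturable as long as she stays adjacent to him, and any Black position from which a White piece could be captured on the next move would necessarily lie on a square the queen already sees, contradicting the assumption that the game is still ongoing. Verifying this case-by-case along the lines of the lower-rank lift completes the argument.
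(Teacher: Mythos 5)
There is a genuine gap, and it sits exactly where you locate the ``main obstacle'': the one-rank lift is not safe, and the paper in fact never performs it. If you advance the White king from rank 4 to a square $(f,5)$ while the queen is still on rank 5, the three rank-6 neighbours of $(f,5)$ cannot all be seen: from a square $(q,5)$ the queen sees only $(q-1,6)$, $(q,6)$, $(q+1,6)$, and since she occupies $(q,5)$ the king must land on $(q\pm1,5)$, leaving $(q\pm2,6)$ invisible. A Black king sitting there simply captures your king and wins. Worse, once the queen hops onto rank 6 she stops controlling rank 5, so a Black king who happened to be standing on rank 6 far from your pieces (the hypothesis allows him anywhere in the slab) is seen but can step down to an uncontrolled rank-5 square and escape the confinement entirely. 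The paper avoids both problems by \emph{never} lifting before contact: the queen stays on rank 5 (the boundary of the slab) during the whole left-to-right scan, and only after she sees the Black king on her own file --- so that White knows which handful of squares Black can occupy --- does the king step diagonally up to rank 5 to cover the escape squares, followed by the queen moving to rank 6.

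The second problem is the claim that the sweep is ``identical to the preceding lemma with every rank label incremented by one'' and always ends with Black on ranks 7--8. With only three ranks of slack the push-up step has an extra escape branch: for instance, after contact on the $C$-file the Black king can end up on \pair{A}{6}, and the correct conclusion in that branch is that Black is confined to the two \emph{files} $A$--$B$ (still a $2\times 8$ area, but not the one you announce); the $D$-file contact case even requires rotating the board and re-entering the $C$-file case. So the lemma is not a relabelling of its predecessor --- it needs its own case analysis over the file on which contact occurs, which is what the paper's proof supplies. To repair your argument you would need to (i) drop the lift, keep the queen on rank 5 and the king on rank 4 during the scan, and (ii) redo the post-contact case analysis for each contact file, tracking the sideways escapes and accepting a two-file outcome where it arises.
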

\begin{proof}
	We use a similar strategy by moving the White pieces to the right, until the queen sees the Black king. Then we force him upwards, restricting his position to a $2 \times 8$ area.
	
	Let the White queen be located on position \pair{A}{5} and her king be on \pair{B}{4}, and the Black king be located on an arbitrary square \pair{A-H}{6-8}. We only consider the moves of the queen moving from $A$ to $D$, as $E$ to $H$ are analogous by symmetry.
	\begin{itemize}
		\item If White sees the Black king on the $A$-file, then take.
		\item If this is not the case, White moves to \pair{B}{5}.
		\item Then White might see the opponent (queen on \pair{B}{5}).
		\begin{itemize}
			\item If White sees the opponent on the $B$-file, then it is Black's turn. Black moves to either \pair{A}{7}, \pair{A}{8}, \pair{C}{7}, or \pair{C}{8}. Then White moves the king upright to \pair{C}{5} and thereby blocks \pair{D}{6}. Black can now be on any of the squares before or on \pair{D}{8}. Move the queen up, and the Black king is restricted to the 7th and 8th rank.		
			\item Otherwise, White moves the king to the right, and then the queen one square to the right.
		\end{itemize}
		\item Then White might see the opponent (queen on \pair{C}{5}).
		\begin{itemize}
			\item If White sees the Black king on the $C$-file, then Black can move to \pair{B}{7}, \pair{B}{8}, \pair{D}{7}, or \pair{D}{8}. Then White moves the king upright, and Black can also be on to \pair{A}{6}, \pair{A}{8}, \pair{E}{8}. White then moves the queen one square up.
			\begin{itemize}
 				\item If the queen does not see the Black king on \pair{A}{6}; Black is on the 7th or 8th rank.
 				\item If the Black king is not on \pair{A}{6}; Black has only two files left, reaching our goal.
			\end{itemize}
			\item Otherwise, White moves the king and then the queen to the right.
		\end{itemize}
		\item White might now see the opponent (queen on \pair{D}{5}).
		\begin{itemize}
			\item If White sees the opponent on the $D$-rank, then move the Black king up-right, then the opponent might be on \pair{B}{6}, \pair{B}{8}, \pair{C}{7}, \pair{C}{8}, \pair{E}{7}, \pair{E}{8}, or \pair{F}{8} (as \pair{F}{6} is covered by the White king). Then move the queen up.
			\begin{itemize}
				\item If the queen sees the Black king on \pair{B}{6}, then we can apply the case above (queen on \pair{C}{5}), but with the board rotated.
				\item Otherwise, the Black king is on the 7th or 8th rank, restricted by the queen.
			\end{itemize}
			\item Otherwise, move the king and then the queen to the right.
		\end{itemize}
	\end{itemize}

	The remaining cases are analogous (differing just in moving the White king up-left).
	
	Hence, White can always force the Black king to the last and second-last rank, or win.
\end{proof}

We now want to force the Black king to have only one rank (or file) left.

\begin{lemma}\label{queen2}
	Let the Black king be located within a restricted $2 \times 8$ square area. The White player has a king and a queen, and the Black player only a king. The White queen is adjacent to her own king. The queen restricts the Black king to this area.
	
	Then White can either win the game or force the Black king to a $1 \times 8$ square area.
\end{lemma}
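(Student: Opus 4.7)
The plan is to mirror the strategy used in Lemma \ref{queen3}, shifted up by one rank. Assume by symmetry that the White queen starts on \pair{A}{6}, the White king on \pair{B}{5}, and the Black king is confined to \pair{A-H}{7-8}. Because the queen on rank 6 attacks much of rank 7 and the king stands one square below her, every legal Black reply must remain on ranks 7--8, and any attempt to capture the queen is met by immediate recapture with the king.

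White then sweeps the queen rightward along rank 6, one file at a time, interleaving king moves along rank 5 so that the king is always on a square adjacent to the queen (hence the queen is always protected). After each queen move, White checks whether the queen now sees the Black king; any sighting corresponds to a square the queen directly attacks, so White captures immediately and wins.

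The decisive case is when the queen first sees the Black king on rank 7 after some sweep step with the queen on file $i$. White reacts as in Lemma \ref{queen3}: advance the king diagonally up-right (or, by symmetry, up-left on the right half of the board) onto rank 6, and then lift the queen one square up to rank 7. The king on rank 6 protects the queen on rank 7, while the queen on rank 7 attacks every square of rank 7 except the two squares adjacent to the king, which are themselves covered by the king. Hence Black is confined to rank 8. If the entire sweep from \pair{A}{6} to \pair{H}{6} completes without the queen ever seeing Black, then Black must have stayed on rank 8 throughout (any step onto rank 7 would have been detected at some point during the sweep), and White finishes by walking the king up to rank 6 next to the queen and then lifting the queen onto rank 7, again restricting Black to rank 8.

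The main obstacle is verifying that the sweep really does force the desired sighting --- one must rule out that Black perpetually dodges the queen's line of sight by alternating between ranks 7 and 8 as the queen advances. This requires arguing that the queen's long-range mobility lets her correct her course if Black tries to outrun the sweep, so Black is trapped within finitely many moves. As in Lemma \ref{queen3}, the remaining bookkeeping reduces to a case distinction on the file at which Black is first spotted, together with a mirrored treatment for the other half of the board, but the one-rank tighter setting makes each local case slightly more delicate than before.
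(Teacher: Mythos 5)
Your high-level template (protected-queen sweep along rank 6 with the king in tow, then squeeze Black onto rank 8) matches the paper's, but the critical step --- what happens at the moment the queen first sees the Black king --- is both misidentified and mishandled. First, a sighting ``on rank 7'' directly above the queen cannot occur: before the queen steps from file $i-1$ to file $i$ on rank 6, she already attacked $(i,7)$ diagonally, so that square is empty and the first sighting is on rank 8, two squares above her. Second, ``White captures immediately and wins'' is false: the sighting happens at the end of White's move, so Black moves next and simply steps off the queen's line, e.g.\ from $(i,8)$ to $(i-1,8)$ or $(i+1,8)$. This creates a two-element uncertainty set, and resolving it is the entire content of the lemma. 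Your prescription --- king diagonally up to rank 6, then lift the queen to rank 7 --- is unsound here: once the queen leaves rank 6 she no longer seals off ranks 1--6, and a Black king that has meanwhile slipped to a rank-7 square outside her new lines (which the escape to $(i-1,8)$ followed by $(i-2,7)$ makes possible) drops below rank 7 and the confinement collapses entirely.

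The paper's proof lives exactly in this gap: it is a case analysis on the \emph{file} of the sighting. Near the left edge (\pair{A}{8}, \pair{B}{8}, \pair{C}{8}) the escape squares are so constrained that Black is stalemated or pinned to a single file; for \pair{D}{8} the key move is a king waiting move with the explicit instruction \emph{not} to lift the queen, which either stalemates Black if he fled right or confines him to the corner region \pair{A-C}{7-8}, after which the sweep restarts and the edge cases finish; files $E$--$H$ are handled by mirroring. Your proposal never confronts the post-sighting uncertainty, and you also explicitly defer (rather than prove) the claim that the sweep forces a sighting at all, so the argument as written does not establish the lemma.
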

\begin{proof}
	As in the lemmas before, we will move the king and the queen to the right. Again, the queen will eventually see the Black king on a field above her (which must be in the topmost line in this case). Assume the White queen sees the king on square
	\begin{enumerate}
		\item \pair{A}{8}. Take the Black king, as it is White's turn.
		\item \pair{B}{8}.  Move the king up-right in the next turn. This leads to a stalemate, with the Black king to play; therefore, Black will be taken in the next move.
		\item \pair{C}{8}. Again, move the king upright. The Black king can only be on \pair{A}{7} (or taken), thus we move the queen to \pair{B}{5} and follow with the king. Thus, Black is restricted to file A.
		\item \pair{D}{8}. Black moves to either \pair{E}{8} or \pair{C}{8}. Then White moves the king upright. \textbf{We do not move the queen up!} If the opponent is on \pair{E}{8}, then he moves into a square which is covered by White, leading to a victory of White. Otherwise, if the Black king is not on \pair{E}{8}, then he is on \pair{C}{8}. We again perform our strategy. As the Black king is now on any of the squares \pair{A-C}{7-8}, we can deal with this case as in the previous cases. In all those cases, we can force the Black king to the 8th rank or win.
		\item \pair{E-G}{1}. Those cases are identical due to symmetry.\qedhere
	\end{enumerate}
	
	Hence, we can force the Black king to the last rank or win.
\end{proof}

Only one line is left; we now show a strategy that guarantees a win for White.
\begin{lemma}\label{queen1}
	
	Let the Black king be located within a $1 \times 8$ square area. The White player has a king and a queen, and the Black player only a king. It is White's turn to move. The White queen is adjacent to her own king. The queen restricts the Black king to this area. In particular, if the White queen does not move, the Black king cannot leave this area.
	
	Then there exists a deterministic strategy for White that wins the game.
	
\end{lemma}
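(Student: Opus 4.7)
Without loss of generality assume the Black king is confined to rank~$8$. A queen alone can cover the whole of rank~$7$ only by standing on that rank, so the White queen must already be on rank~$7$; after symmetric preliminary maneuvers analogous to those of the previous lemmas we may assume the canonical configuration White queen on~\pair{A}{7}, White king on~\pair{B}{6}, with the Black king somewhere in~\pair{C-H}{8} (since~\pair{A}{8} and~\pair{B}{8} are attacked by the queen, by the opening observations of this section any Black king there would already have been captured).

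The plan is a coordinated sweep to the right along ranks~$6$ and~$7$. White alternates a queen advance (one file to the right on rank~$7$) with a king advance (one file to the right on rank~$6$). After every White move the king stands orthogonally or diagonally adjacent to the queen, so the queen is always protected: any Black capture of the queen leaves Black's king adjacent to White's king, and Black loses the game on the next move. The king advances on rank~$6$ are automatically safe, since Black is confined to rank~$8$ and therefore cannot attack rank~$6$.

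Let $S$ denote the set of squares on which the Black king could still stand. A queen on rank~$7$ attacks all of rank~$7$ and, on rank~$8$, exactly her own file and the two adjacent files (via her forward diagonals). A short case analysis then yields the key invariant: when Black stands on the leftmost element of $S$ directly after a queen advance, every rank-$7$ neighbor is attacked and the unique safe rank-$8$ neighbor lies one file further right, so Black's unique safe reply is one file to the right. Intermediate king-advance turns do not change the queen's attack pattern and hence cannot reintroduce a previously eliminated square into~$S$; consequently the leftmost element of~$S$ strictly advances by one file with every queen move. After five queen advances the queen stands on~\pair{F}{7}, the king on~\pair{F}{6}, and $S \subseteq \{\pair{G}{8},\pair{H}{8}\}$.

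The endgame is now immediate. From~\pair{H}{8} all three legal destinations~\pair{G}{7},~\pair{G}{8},~\pair{H}{7} are attacked by the queen, so a Black king on~\pair{H}{8} has no safe move and loses at once; a Black king on~\pair{G}{8} is forced to~\pair{H}{8}. White then plays the waiting move king~\pair{F}{6}\,$\to$\,\pair{G}{6}; the queen's attacks are unchanged, and Black, now definitely on~\pair{H}{8}, loses on the following move. The main difficulty of the proof is the case analysis establishing the strict monotone shrinkage of~$S$; once that is in hand, the final squeeze is mechanical.
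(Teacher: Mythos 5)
Your proof is correct and follows essentially the same approach as the paper: place the queen on the rank adjacent to Black's confining rank with the escorting king one rank behind, then sweep both pieces file by file to the right so that the protected queen's three-square cover on rank~8 drives the Black king into the corner, where he runs out of safe squares. You simply work out in explicit detail (the shrinking set $S$, the final waiting move) what the paper's much terser proof leaves implicit.
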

\begin{proof}
	We start with the queen on \pair{A}{7} and the king on \pair{A}{6}. The Black king is on \pair{A}{1-8}.
	
	We move the queen gradually to the right, while the king follows her. 
	If White sees the opponent, he can escape to the right until he eventually he runs out squares and loses.
\end{proof}

\begin{corollary}
	If the player with the queen starts and plays according to the strategy outlined in this section, then he can always win against the player with only a king, given he knows the initial position of the opponent's king and starts.
\end{corollary}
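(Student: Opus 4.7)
The plan is to chain together the lemmas of this section into a single deterministic winning strategy, verifying at each hand-off point that the hypotheses of the next lemma are satisfied. The structure mirrors the two-stage decomposition announced at the start of the section: first reach the corner configuration, then successively shrink the admissible region of the Black king.

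For Stage 1, I would appeal directly to the corollary following the lemma that brings the queen adjacent to the king: starting from an arbitrary position in which White knows the Black king's initial square, White reaches the corner configuration (or wins outright) in finitely many moves. In particular, after Stage 1 the White king sits on a corner square, the queen is on an adjacent square, and the hypotheses of the first Stage 2 lemma (corner configuration, White to move) hold.

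For Stage 2, I would argue by iterated application of the restriction lemmas. The first lemma of Stage 2 restricts the Black king to a $6\times 8$ area, thereby supplying exactly the hypothesis of the next lemma, which pushes him into a $3\times 8$ area. Lemma~\ref{queen3} then reduces this to a $2\times 8$ area, Lemma~\ref{queen2} further to a $1\times 8$ area, and finally Lemma~\ref{queen1} converts this into a capture of the Black king. At each transition I would check two things: that the queen remains adjacent to the White king (so that the next lemma's ``queen adjacent to her own king'' hypothesis is met), and that whenever a lemma offers the alternative ``or White wins,'' the strategy is simply terminated by the winning move. Since there are only finitely many such reduction steps, and each step either wins or strictly reduces the admissible region, the composite strategy is deterministic and terminates in a win.

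The only non-routine point is ensuring that the output configuration of each stage genuinely matches the input hypothesis of the next, including who is to move; this is immediate from the statements of the lemmas, since each is phrased with White to move at the start and each ends after a Black reply (or a White capture). No additional combinatorial argument is required beyond this bookkeeping, and the corollary follows.
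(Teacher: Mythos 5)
Your proposal is correct and matches the paper's (implicit) argument exactly: the corollary is obtained by chaining the Stage 1 corollary (reach the corner configuration or win) with the successive Stage 2 restriction lemmas ($6\times 8 \to 3\times 8 \to 2\times 8 \to 1\times 8 \to$ win), checking at each hand-off that the queen-adjacent-to-king hypothesis and the turn order are preserved. The paper offers no further argument beyond this bookkeeping, so nothing is missing.
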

Therefore, the player with the queen can always win against the player with only a king. 

\section{Rook and King vs. King}
In this section, we will show that a player with the rook does not always have a winning strategy against a player with only a king. This differs from classical chess, where a player with a rook can always force a win if he starts \cite{rook_win}!

In other words, there exists a configuration (starting position of the pieces), such that for any strategy of the player with the rook, there always exists a move for the player with only a king, such that he does not lose.
As in the section above, we assume that White has material advantage, knows the initial position of Black, and starts.

If White has a strategy to always win, he must win regardless of how the Black king moves. The Black king might be lucky and always guess the right move, such that White does not win. Therefore, we may assume that Black knows the position of White's pieces.

A first observation is that if Black captures a rook which not immediately covered, White will no longer have chances to win (as Black knows the White position).

The proof can be divided into two parts:
First, we will show that Blacks king cannot be taken if he is not at an edge. After that, we will show that he cannot be pushed to an edge.

\begin{figure}[H]
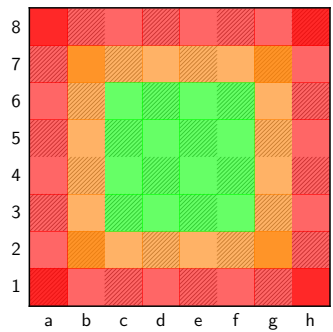

	\centering
	\scalebox{0.7}{%
		\newchessgame
		\setchessboard{setpieces={},
			showmover=false}
		\chessboard[pgfstyle=color,opacity=0.6,color=orange, markarea=b2-b7, markarea=b2-g2, markarea=g2-g7, markarea=b7-g7,
		color=green,
		markarea=c3-f6,
		color=red,
		markarea=a1-h1, markarea=a1-a8, markarea=a8-h8, markarea=h1-h8]
	}
	\caption{The green squares mark the safe space \pair{C-F}{3-6}. The orange squares indicate the danger-zone, the red squares indicate the death-zone}
\end{figure}

We claim that if the Black king is not at an edge, he cannot be taken in the next move:
\begin{lemma}
	Let it be Black's turn. And let the Black king be anywhere not on the edge of the board, i.e. \pair{B-G}{2-7}. 
	Then White cannot win in his next move.
\end{lemma}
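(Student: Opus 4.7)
The plan is to show that from any interior square \pair{B-G}{2-7}, where all eight neighbouring squares are on the board, the Black king can always find a safe move. Writing $B$ for the Black king's square, call a move to a neighbour $s$ of $B$ \emph{safe} if after moving there White cannot capture the Black king on the next move. A move that captures the White king is also safe, since it ends the game in Black's favour; a move that captures the rook is safe provided the White king does not cover that square.

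The argument is a counting bound followed by case analysis on the White configuration. The White king attacks at most four of $B$'s neighbours, and those attacked neighbours always form a contiguous arc in the cyclic ordering around $B$; moreover, when the White king itself is a neighbour of $B$, Black may capture it (winning the game, unless the rook defends the capture square). The rook attacks its rank and file, threatening at most five of $B$'s neighbours, with that maximum achieved only when the rook sits at a diagonal neighbour of $B$ (in which case Black may also capture the rook).

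The main case split is on whether the White king is adjacent to $B$. If it is not, the king threatens at most three neighbours, and a further sub-split on the rook's position --- far from $B$, on $B$'s rank or file, on an adjacent rank or file, or at a neighbour of $B$ --- leaves in each sub-case at least one of the eight neighbours safe, by combining the contiguous arc of king threats with the rank-and-file ``cross'' of rook threats. The tight sub-case is a diagonally adjacent rook, say at $B+(1,1)$; then the three non-capture safe squares $\{B+(-1,-1),\,B+(-1,0),\,B+(0,-1)\}$ form a contiguous 3-arc that the king could in principle cover. A small geometric lemma closes this: any square within king-distance one of each of those three is itself one of them, so the White king would have to be adjacent to $B$ --- contradicting the case assumption. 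If instead the White king \emph{is} adjacent to $B$, capturing it wins the game unless the rook defends the king's square; but a defending rook must share the king's rank or file with clear line from the far side, and this is geometrically incompatible with simultaneously covering every alternative safe square on the opposite side of $B$.

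The main obstacle is precisely these two incompatibility arguments --- ``a diagonally adjacent rook cannot force the White king to cover the entire safe 3-arc without the White king being itself adjacent to $B$'' and ``a rook defending the White king cannot simultaneously block every other escape'' --- both of which reduce to short checks on which ranks and files can pass through $B$'s $3 \times 3$ neighbourhood, combined with care that vacating $B$'s current square does not unexpectedly open new rook attacks. Once these facts are in place, the theorem follows by an exhaustive but finite case analysis on the adjacency patterns of the two White pieces.
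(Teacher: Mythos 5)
Your proposal is correct and follows essentially the same route as the paper: bound how many of the eight neighbouring squares the rook and the White king can each cover, observe that the only tight configuration requires a piece adjacent to the Black king, and conclude that Black then wins or escapes; your version is simply a more careful elaboration of the paper's terse counting argument, including the genuine subtlety (which the paper glosses over) that vacating the Black king's square can open a new rook line. One small correction: under the stated rules a player wins by \emph{taking} the opponent's king, so capturing the adjacent White king ends the game immediately and is always safe --- there is no ``unless the rook defends the capture square,'' and the extra case you analyse for a defended White king, while handled correctly, is unnecessary.
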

\begin{proof}	
	Assume White wins in his next move.
	Therefore, White  must achieve either a stalemate or checkmate (as Black does not move voluntarily into a square covered by White):

	 As Black is not at the edge of the board, White has to cover all nine squares adjacent to the Black king. This is not possible. The rook can only cover five squares:
	
	\begin{figure}[!ht]
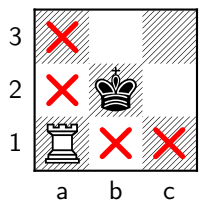

		\centering
		\newchessgame
		\setchessboard{
			printarea=a1-c3,
			showmover=false,
			setpieces={kb2, Ra1},
			pgfstyle=cross,
			color=red,
			linewidth=0.1em,
			shortenstart=0.5ex,
			shortenend=0.5ex,
			markfields={a2, a3, b1, c1}
		}
		\chessboard
		\caption{The rook can only cover five squares adjacent to the Black king. The red crosses indicate the squares covered by the rook.}
	\end{figure}
	
	Besides that, the rook can be taken by the Black king: there exists no way to cover the remaining squares (here \pair{B}{3}, \pair{C}{2}, \pair{C}{3}) with the White king without placing him on an adjacent square to the Black king (where Black can win then).
\end{proof}

It remains to show that the Black player cannot be pushed to an edge.

\begin{lemma}\label{kingpush}
	Let it be White's turn. Let Black be on a square which is not covered by White. And let Black start on any of the squares \pair{C-F}{3-6}.
	
	Then either Black cannot be forced onto an edge, or White cannot win.
\end{lemma}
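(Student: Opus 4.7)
The plan is to exhibit a concrete strategy for Black and argue that under it White never captures the Black king. At each turn, Black picks a move by the following priority: (i) capture any White piece adjacent to the Black king which is not defended by the white king; (ii) failing that, move to an unattacked square inside the safe zone \pair{C-F}{3-6}; (iii) failing that, move to an unattacked square anywhere off the edge; (iv) only as a last resort, move to an unattacked edge square. The overall claim is that this strategy guarantees either that Black never lands on an edge (in which case the preceding lemma prevents a capture), or that at some point priority (i) fires and the game reduces to king-versus-king, which White cannot win.

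The first step is to verify that Black always has \emph{some} unattacked adjacent move, i.e.\ that the priority list never bottoms out. Because Black never voluntarily steps next to the white king, and if White ever moves his king adjacent to Black then priority (i) captures it, the white king stays at Chebyshev distance at least two from Black, and hence attacks at most three of Black's (up to eight) neighbours. The rook attacks at most four more; and if the rook is itself adjacent to Black it must be defended by the white king, but this defence forces the white king into a geometry that severely restricts how many of Black's neighbours it can simultaneously threaten. A direct count then yields that strictly fewer than all of Black's neighbours are attacked, so an unattacked neighbour always exists.

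Next, a case analysis on Black's square shows that priorities (ii) or (iii) succeed in most positions. When Black sits in the interior \pair{D-E}{4-5} of the safe zone, all eight neighbours lie in \pair{B-G}{2-7}, and the counting argument above produces an unattacked neighbour inside the safe zone. For squares on the boundary or at a corner of the safe zone, the same count still yields an unattacked neighbour; this neighbour may lie in the danger zone \pair{B-G}{2-7} outside the safe zone, but never on an edge. From the danger zone, Black either steps back into the safe zone on his next move, or White must commit to a rook placement so close to Black that the rook becomes adjacent to Black without the white king being able to defend it; then priority (i) fires and Black captures the rook.

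The main obstacle is the residual subcase in which Black is nonetheless forced onto an edge square via priority (iv). We must verify that even then White cannot complete a capture. The crux is that to cover all of Black's remaining three or five neighbours when Black is on an edge, the white king must approach to Chebyshev distance exactly two; but every such approach either puts the rook on a square adjacent to Black without white-king defence — triggering priority (i) on Black's next move — or leaves one of Black's non-edge neighbours unattacked, so that Black retreats back into the danger zone. Walking through each edge and corner configuration verifies this dichotomy, with the hardest subcases being the actual board corners \pair{A}{1}, \pair{A}{8}, \pair{H}{1}, \pair{H}{8} (since there Black has only three neighbours and White's coverage budget is tightest, so pinning down which king-and-rook placements are actually realisable under the Chebyshev-distance-at-least-two constraint requires some care). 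In every branch one obtains either avoidance of the edge or a forced capture of the white rook, completing the disjunction in the lemma's conclusion.
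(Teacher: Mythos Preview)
Your approach has a structural gap: it never uses the information asymmetry that makes this a Fog of War result rather than a classical one. The priority rule you describe is (up to tie-breaking) a deterministic function of the full board, and White knows the full board from its own side together with Black's initial square. Hence White can simulate your rule, recover Black's exact square after every move, and then simply play the textbook K+R versus K procedure, which is a forced win even without stalemate. Concretely, from the paper's own ``forced out'' position (White king \pair{D}{5}, rook \pair{H}{3}, Black king \pair{D}{3}) your priorities send Black to rank~2; White advances the king to \pair{D}{4}, plays the rook to \pair{H}{2}, and Black's only unattacked neighbours are \pair{C}{1}, \pair{D}{1}, \pair{E}{1}, so priority (iv) fires. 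Once Black is on rank~1 and White knows which square, the standard mate works: Black on \pair{D}{1}, White king on \pair{D}{3}, rook to \pair{A}{1} leaves \emph{every} neighbour of Black attacked, with the rook neither adjacent to Black nor in need of defence. This single position already refutes the dichotomy in your final paragraph (``either the rook is adjacent and undefended, or a non-edge neighbour is free'').

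The paper's proof takes a genuinely different route that is forced by the Fog of War setting: rather than fixing a single Black strategy, it tracks the \emph{set} of squares Black could occupy from White's point of view and shows that White can never collapse this set to something it can cover. The deferred Lemmas~(*), (**), (***) are exactly the maintenance of such clusters (e.g.\ $\{\text{\pair{C}{1}},\text{\pair{C}{2}},\text{\pair{E}{1}},\text{\pair{E}{2}}\}$), and Black's ``move'' is really a commitment made only after seeing which cluster member White's move threatens. To repair your argument you would need to replace the single-square priority rule by this kind of multi-square invariant and argue it is preserved under every White move; the local counting argument that some neighbour is unattacked is not enough.
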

\begin{proof}
	We will show that if the White player starts (and cannot immediately take the Black king), then the Black king can always remain either inside the safe space or on squares from which his moves cannot be reduced to one edge (undesired square).
	Black always tries to prevent this by going to a square inside the safe-zone whenever there exist uncovered ones.
	
	We distinguish between a corner case and an edge-but-not-corner case. The corner squares of the safe space are \pair{C}{3}, \pair{C}{6}, \pair{F}{3}, and \pair{F}{6}. The edge-but-not-corner squares are all other squares on the edge of the safe space.

	\paragraph*{Case 1: The Black king is not in a corner of the safe space:} 
	If he is somewhere on \pair{D-E}{4-5}, then White cannot force him out of the safe space in the next move, and he cannot be captured if he is not on the edge.
	
	Thus, the Black king is at the edge of the safe space, but not in a corner. To be forced down from here, the board must look like Figure~\ref{blackatedge}.
	
	\begin{figure}[!ht]
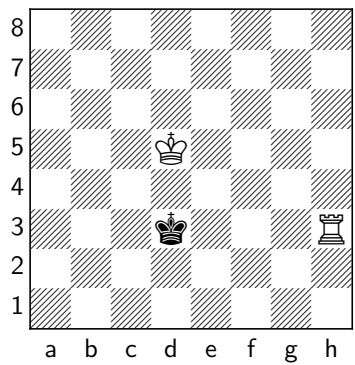

		\centering
		\newchessgame
		\setchessboard{
			%printarea=a1-e5,
			showmover=false,
			boardfontsize=15pt,
			setpieces={kd3, Rh3, Kd5}
		}
		\chessboard
		\caption{The only possibility Black can be forced down by White if he is not in a corner of the safe space (\pair{C}{3}, \pair{C}{6}, \pair{F}{3}, or \pair{F}{6}).}
		\label{blackatedge}
	\end{figure}
	
	Black can only be forced out of the safe space, if the rook attacks the rank or file of the Black king, and the White king covers the remaining squares inside the safe space. 
	
	Let the White king be on \pair{D}{5}, the rook on \pair{H}{3}, and the Black king on \pair{D}{3} (as on Figure~\ref{blackatedge}). 
	If the Black king is forced down, he can always move to \pair{D}{2}, or \pair{E}{2}, regardless. White cannot know where the Black king moved; therefore, he might be on any of those squares.
	
	\begin{itemize}
		\item Then White can move the king, afterwards Black can move from \pair{D}{2} to \pair{E}{2}, \pair{D}{1}, or \pair{E}{1} and from \pair{E}{2} to \pair{D}{2}, which allows the application of Lemma~\ref{lem:***}
		\item If White moves the rook up or down: Black returns to the safe space. If the rook is moved left or right, either the rook is captured or the Black king can move as in the case the king moved, therefore we can apply Lemma~\ref{lem:***}.
		\end{itemize}

	\paragraph*{Case 2: Black is forced outside the safe space from a corner of the safe space:}
	
	Without loss of generality, let this corner be \pair{C}{3}.	To force the Black king outside the safe space, White has to cover all squares adjacent to the Black king in the safe space:
	
	We note that the White king must always cover at least one square in the safe space adjacent to the Black king. Otherwise, the rook must be on \pair{D}{4} to cover all. Then either the king must be adjacent (but then he would cover \pair{D}{4}), or Black can take the rook.
	
	We now consider all combinations of the White king's and rook's positions that cover all safe spaces adjacent to the Black king. We then analyze the situation after White's next move.
	\begin{enumerate}
		\item Place the White king on \pair{B}{5}. Then \pair{C}{4} is covered. The rook then has to cover the $D$-file.
		\begin{enumerate}
			\item If the rook is placed on \pair{D}{2}, \pair{D}{3}, or \pair{D}{4}, the Black king would capture it.
			\item If the rook is placed on \pair{D}{1}, then Black moves to \pair{C}{2}; the rook then must retreat. After that, Black can return to \pair{C}{3} (if the rook moved to \pair{D}{3}, Black would capture it).
			\item If the rook is placed on \pair{D}{5-8}, the Black king could move to any square among \pair{B}{3}, \pair{B}{2}, or \pair{C}{2}, which introduces uncertainty for White. White is to move.
			\begin{enumerate}
				\item Let White move the king. If the king moves downward, Black can capture, as he might be on \pair{B}{3}. If the White king moves elsewhere, the Black king returns to \pair{C}{3}.
				\item Let White move the rook. If the rook moves up, to the right, or down one square, Black can return to \pair{C}{3}. If the rook is moved further down, Black, being on \pair{C}{2}, captures. If the rook is moved left, then Black, being on \pair{C}{2}, moves to \pair{D}{3}.
			\end{enumerate}
		\end{enumerate}
		\item  White king on \pair{C}{5}. Then \pair{C}{4} and \pair{D}{4} are covered. The rook only needs to cover \pair{D}{3}.
		\begin{enumerate}
			\item Rook on \pair{D}{4}: White moving anything anywhere leads to Black moving back to the safe space or capturing material of White.
			\item Rook on \pair{D}{2} or \pair{D}{3}: then the Black king captures it.
			\item Rook on \pair{D}{1}: then Black moves down to \pair{C}{2}. The White rook must move; he can only block the safe space \pair{C}{3} by going to \pair{D}{3}, where the Black king captures it. 
			\item Rook on \pair{D}{5-8}. Black moves to one of \pair{B}{2}, \pair{B}{3}, \pair{C}{2}. 
			\begin{enumerate}
				\item If White moves the rook, Black can either capture it or move back to \pair{C}{3}. 
				\item If the king moves down-right, the Black king could move to any of \pair{A-B}{1-4}, \pair{C-D}{1-2}. This case is discussed in Lemma~\ref{lem:*}.
				\item If White moves the king anywhere else, then Black can return to \pair{C}{3}.
			\end{enumerate}
			\item Rook on \pair{A}{3}: Then moving the Black king would threaten the rook, forcing it to move; after this, Black can either capture the rook or return to \pair{C}{3}.
			\item Rook on \pair{B}{3}, or \pair{D}{3}: in each case, the Black king captures it.
			\item Rook on \pair{E-H}{3}: Black moves to either \pair{B}{2}, \pair{C}{2}, or \pair{D}{2}. White is to move.
			\begin{itemize}
				\item Moving the rook up or down allows the Black king to return to \pair{C}{3}.
				\item Moving the king lets the Black king expand to \pair{A}{1}, \pair{A}{2}, \pair{B}{1}, \pair{C}{1}, or \pair{D}{1}, while also possibly being on \pair{C}{2}, \pair{D}{2}. This case is discussed in Lemma~\ref{lem:**}.
				\item  Moving the rook to the left or right either results in the rook being captured or restricting the Black king, as in the last case. This case is discussed in Lemma~\ref{lem:**}.
			\end{itemize}
		\end{enumerate}
		\item Place the White king on \pair{D}{5}. This case is analogous to (\textbf{Case 2}.).
		\item Place the White king on \pair{E}{5}. Then the rook must be on \pair{D}{4}. In this case, the Black king moves to either \pair{B}{3}, \pair{B}{2}, or \pair{C}{2}.
		\begin{figure}[!ht]
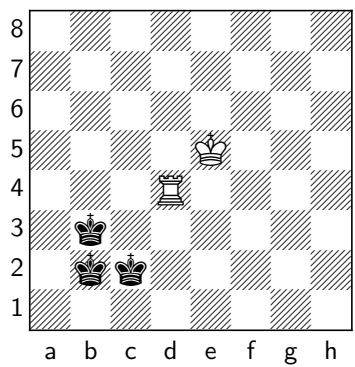

			\centering
			\newchessgame
			\setchessboard{
				showmover=false,
							boardfontsize=15pt,
				setpieces={Ke5, Rd4, kb3, kb2, kc2}
			}
			\chessboard
			\caption{Black moves to one of \pair{B}{3}, \pair{B}{2}, or \pair{C}{2}.}
		\end{figure}
		\begin{itemize}
			\item If the rook moves down or left, it is captured.
			\item Otherwise, the Black can return to \pair{C}{3} (safe space).
		\end{itemize}
	\end{enumerate}
	Hence, either the Black king cannot be forced to one edge or White cannot win.
\end{proof}

In three situations, the Black king was forced out of the safe space, and could not directly return.
Here, however, the Black king can always move to a cluster of squares, from which he can either return to the safe space, capture material of White (leading to a draw or victory for Black), or the cluster of squares reoccurs (thus Black survives in an endless loop).

We start with an important lemma, which we will use multiple times in the following.
\begin{lemma}[***]\label{lem:***}
	Let the Black king be in one of the following clusters of squares:
	\begin{enumerate}%[label=(\arabic*)]
		\item \pair{C}{1}, \pair{C}{2}, \pair{E}{1}, \pair{E}{2}
		\item \pair{D}{1}, \pair{D}{2}, \pair{F}{1}, \pair{F}{2}
		\item \pair{D}{1}, \pair{D}{2}, \pair{E}{1}, \pair{E}{2}
	\end{enumerate}
	Let the White king and the White rook be anywhere outside (1), (2), or (3), and they do not cover any of the clusters where the Black king is located.
	Assume that it is White's turn, then either the lemma reoccurs, Black captures a piece of White, or moves to the safe space.
\end{lemma}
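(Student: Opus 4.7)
My plan is to prove the lemma by exhaustive case analysis on which of the three clusters contains Black's king and on White's next move. The hypothesis says that White's king and rook stand outside the cluster and cover none of its four squares, so every square of the cluster is a genuine candidate for Black's actual position. The conclusion is per-candidate: for each square of the cluster, considered as Black's actual position, I must exhibit a legal Black response that lands in the safe space \pair{C-F}{3-6}, captures a White piece, or produces a configuration again satisfying the hypotheses of this lemma.

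The geometric backbone of the argument is that every rank-\(2\) square of every cluster is one king-step below a safe-space square on rank \(3\): \pair{C}{2}, \pair{D}{2}, \pair{E}{2}, \pair{F}{2} all step up into \pair{C-F}{3}. So a rank-\(2\) Black king escapes upward as soon as the square directly above him (and the diagonally adjacent rank-\(3\) squares) is uncovered. For rank-\(1\) candidates, I would use the overlap structure: cluster (3) shares two squares with cluster (1) and two with cluster (2), so a rank-\(1\) Black king has a step that lands in one of the other clusters. With these preliminaries, I would handle White's moves in three groups: first, moves that leave the rook adjacent to a cluster square and undefended, in which some candidate Black captures; second, moves that pull both White pieces away from the cluster, in which rank-\(2\) Blacks escape into safe space while rank-\(1\) Blacks step up one rank, thereby landing in a cluster square with White no longer covering it; and third, moves in which White attempts to block the rank-\(3\) escapes. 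The third case is the meaningful one, and I would resolve it by a pigeonhole-type observation: king-plus-rook cover at most three of the four candidate rank-\(3\) escape squares simultaneously without leaving a piece adjacent to a cluster candidate, so at least one rank-\(2\) Black can still walk to safety.

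The main obstacle I expect is the case-count for clusters (1) and (2), where the \emph{gap file} (D between C and E for cluster (1), E between D and F for cluster (2)) lets the rook attack two cluster squares at once from a rank-\(1\) or rank-\(2\) square on that file. In such a configuration a rank-\(1\) Black may not be able to escape straight up, because the rank-\(3\) square above his own file can be guarded by the White king; I would rely on the diagonal step into cluster (3) here — from C1 in cluster (1), Black slides to D2 and now sits on the second rank of cluster (3), and symmetrically for E1 — and then argue that White's pieces, unchanged by the move Black is responding to, still cover no square of cluster (3). Verifying these coverage conditions after each shift is the error-prone bookkeeping step, and I would organise it as a small table indexed by (starting cluster, White's move type, Black's actual square) and check that each entry meets one of the three allowed outcomes.
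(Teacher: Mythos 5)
There is a genuine gap at the centre of your plan: the pigeonhole claim on which your third (and, as you say, ``meaningful'') case rests is false. A rook standing on rank~$3$ on any of the files $A$, $B$, $G$, $H$ --- for instance on \pair{H}{3}, which is exactly where the rook sits in the position from which this lemma is first invoked in Lemma~\ref{kingpush} --- attacks all four escape squares \pair{C}{3}, \pair{D}{3}, \pair{E}{3}, \pair{F}{3} at once, is adjacent to no square of any cluster, and covers no square of any cluster. So White can block every rank-$3$ escape while fully satisfying the hypotheses of the lemma, and in that situation no rank-$2$ candidate ``walks to safety''. The conclusion must then be reached through the \emph{reoccurs} branch: Black stays on ranks $1$--$2$ and survives by shuffling among the three overlapping clusters. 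That is the substantive content of the lemma, and it is the part your plan omits: one has to argue that White cannot convert the rank-$3$ blockade into a win --- if the rook slides along rank~$3$ towards files $C$--$F$ it either becomes capturable or illuminates only one file of the current cluster, whereupon Black declares himself to have been on the other file and re-forms an overlapping cluster; if the rook drops to ranks $1$--$2$ the whole of rank~$3$ opens and the king alone covers at most three consecutive squares of \pair{C-F}{3}; if the rook leaves rank~$3$ upwards, Black escapes at once. Your groups (a) and (b), and your use of the diagonal step \pair{C}{1}$\to$\pair{D}{2} into cluster (3) to handle the gap file, do match the paper's mechanism, but they only dispose of the easy cases.

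A secondary problem is the quantifier in your ``per-candidate'' framing. Black is the adversary: he needs only \emph{one} square of the cluster, consistent with all of White's observations so far, together with one reply from it. Demanding a good reply from \emph{every} candidate is both more than is required and sometimes unachievable --- if White moves the rook onto or through a cluster square (say onto \pair{D}{2}), the candidate sitting there is simply captured and has no reply; the argument survives only because Black retroactively commits to a candidate that was never on the illuminated squares, exactly as the paper does when the rook lands on the $C$-, $D$-, $E$- or $F$-file. Organising the case table around ``which candidate does Black commit to, and which full cluster does the resulting uncertainty set still contain'' rather than ``what does each candidate do'' is what makes the bookkeeping close.
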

\begin{proof}
	White is to move.
	\begin{itemize}
		\item If White moves the king, then the square combinations (1), (2), or (3) sustain themselves, or Black can take the White king. If the White king moves to the $B$ or $G$ file (which is safe when the rook covers $C$ resp. $F$) and blocks squares of some clusters; note that the rook cannot be protected by the king when moving to the left or right on line $3$ and will be catched, and that the Black king can advance to a safe space when the rook moves on other lines. Thus, the White king stays behind and covers the rook.
		\item If White moves the rook
		\begin{enumerate}
			\item left or right, it might uncover the Black kings location if placed between the $C$ to $F$ file. In any case, Black can in its next turn move the king to any square of another, undiscovered cluster. If the rook moves to the $C$ or $F$ file, we can assume that the Black king was in the invisible file. Then, or if the rook moves to the $D$ or $E$-file, Black can move the king. If, e.g, the rook is moved to $D$, then
				\begin{enumerate}
					\item if the Black king is in (2) or in (3), the Black king is on \pair{D}{2}. The Black king then moves into cluster (1), and this lemma reapplies.
					\item if the Black king is in (1), the Black king can stay there, and the assumptions remain unchanged.
				\end{enumerate}
			\item If the rook moves down to \pair{A-H}{1-2}, then move the Black king to the safe space, here: \pair{C-F}{3}, which is adjacent to (1), (2), and (3). This is possible because the White king can cover at most \pair{C-E}{3} or \pair{D-F}{3}. If the rook blocks the last one (\pair{C}{3} or \pair{F}{3}), the Black king can capture the rook or move to the safe space.
			\item If the rook moves up or down to \pair{A-H}{3-7}, the Black king can remain in his cluster.
		\end{enumerate}
	\end{itemize}
	Thus, either this lemma reoccurs, Black captures a piece, or moves to the safe space.
\end{proof}
 In a situation we have deferred in Lemma~\ref{kingpush}, White cannot guarantee a win.

\begin{lemma}[*]\label{lem:*}
	If the Black king is on \pair{A-B}{1-4} or \pair{C-D}{1-2}, it is White's turn, the White king is on \pair{D}{4}, and the rook is on \pair{D-H}{4-8}$\setminus \text{\pair{D}{4}}$.
	Then either this lemma reoccurs, Black captures a White piece, moves back to the safe space, or Lemma~\ref{lem:***} occurs.
\end{lemma}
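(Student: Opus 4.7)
I would prove Lemma \ref{lem:*} by a case distinction on White's next move, tracking the set of possible Black king positions from White's viewpoint throughout. The key geometric observation is that the White king at \pair{D}{4} covers exactly its eight neighbours \pair{C-E}{3-5} minus \pair{D}{4}, which separate the twelve-square cluster \pair{A-B}{1-4} together with \pair{C-D}{1-2} from the safe space, and simultaneously block both the D-file below the king and the 4-rank towards the cluster. As a consequence, a rook anywhere in \pair{D-H}{4-8} minus \pair{D}{4} cannot reach any cluster square along a straight line; thus any direct threat to Black must come either from the rook leaving its region or from the king moving off \pair{D}{4}.

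First I would handle rook moves staying inside \pair{D-H}{4-8}. In this subcase no cluster square is newly attacked, and a short enumeration shows that each of the twelve cluster squares has at least one neighbour inside the cluster that avoids the eight king-covered squares, so Black picks such an intra-cluster move and the lemma reoccurs. Rook moves leaving the region split into lateral slides on ranks five through eight into files $A$, $B$, or $C$ (attacking a cluster file), and descents on files $E$ through $H$ into ranks $1$, $2$, or $3$ (attacking a cluster rank). For each destination I would check that either the rook lands adjacent to some possible Black king and is captured, or the rook's rank/file attack funnels Black's possible positions into a set contained in one of the Lemma \ref{lem:***} clusters such as \pair{C}{1}, \pair{C}{2}, \pair{E}{1}, \pair{E}{2} or \pair{D}{1}, \pair{D}{2}, \pair{E}{1}, \pair{E}{2}; for instance, a rook descent to \pair{E}{1} attacks rank one, and from \pair{D}{1} or \pair{D}{2} Black captures the rook, while from \pair{C}{1} Black moves to \pair{D}{2} or \pair{C}{2}, steering the residual uncertainty into the Lemma \ref{lem:***} cluster.

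For king moves I would enumerate the eight destinations around \pair{D}{4}. The aggressive ones \pair{C}{3}, \pair{C}{4}, \pair{D}{3}, \pair{E}{3} each sit adjacent to at least one cluster square, so the corresponding cluster starts capture the king, while the remaining cluster starts are handled by noting that the same move uncovers one of \pair{C}{3}, \pair{D}{3}, or \pair{E}{3}, opening a one-step route into the safe space or into a Lemma \ref{lem:***} cluster. The defensive destinations \pair{C}{5}, \pair{D}{5}, \pair{E}{4}, \pair{E}{5} enable no capture but uncover several safe-space entries, so every cluster square in files $B$, $C$, or $D$ reaches the safe space in one step; the $A$-column starts have no single-move safe-space entry, and I would pick their retreats inside the cluster so that the union of resulting Black positions fits one of the Lemma \ref{lem:***} clusters.

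The main obstacle is the combinatorial bookkeeping: twelve cluster squares must be handled simultaneously under every possible White move, and for rook moves the partial revelation of Black's location along the newly attacked rank or file must be tracked in order to identify the correct Lemma \ref{lem:***} cluster. The subtlest cases are the $A$-column Black positions against defensive king moves, where no single-move outcome from the four listed is directly achievable from any individual starting square; there one has to choose Black's retreat carefully and exploit the fact that other cluster starts have already been resolved by capture or by stepping into the safe space, so that the residual uncertainty can be confined to one of the Lemma \ref{lem:***} clusters.
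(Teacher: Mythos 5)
Your case skeleton is the same as the paper's: White's move is either a rook move staying in \pair{D-H}{4-8} (nothing changes, the lemma reoccurs), a rook move onto files $A$--$C$ or ranks $1$--$3$ (the rook is either captured or the revealed line lets Black's remaining uncertainty be funneled into a cluster of Lemma~\ref{lem:***}, possibly after rotating the board --- a detail you omit for the rank-$1$--$3$ descents, where the target cluster is \pair{A-B}{4-5} rather than one of the three listed), or a king move (capture if the new square is adjacent to a possible Black position, escape otherwise). Your opening geometric observation --- that the king on \pair{D}{4} blocks the $D$-file and the fourth rank so the rook attacks no cluster square --- is exactly the reason the paper's case (2) works. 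Two small slips: \pair{C}{5} is adjacent to \pair{B}{4} and so belongs with your ``aggressive'' destinations, and ``the rook lands adjacent to a possible Black king and is captured'' needs the caveat that the rook is unprotected (on \pair{C}{5} or \pair{E}{3} it is covered by the king on \pair{D}{4}, so Black must use the funneling branch instead).

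The genuine problem is your treatment of the defensive king moves. You try to supply a good continuation for \emph{every} square of the cluster simultaneously, and for the $A$-file squares you claim their retreats can be chosen ``so that the union of resulting Black positions fits one of the Lemma~\ref{lem:***} clusters.'' That step fails: from \pair{A}{1} the king reaches only \pair{A}{2}, \pair{B}{1}, \pair{B}{2}, and no cluster of Lemma~\ref{lem:***}, even up to rotation, contains the one-move images of all of \pair{A}{1-4}. But the step is also unnecessary, and recognizing this is the missing idea. The lemma is part of an adversary argument refuting a White winning strategy: Black only needs \emph{one} position consistent with White's observations from which the play survives, and he may discard the rest of the uncertainty set at will. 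So when the White king retreats to \pair{D}{5}, \pair{E}{4}, or \pair{E}{5}, Black simply commits to having been on \pair{C}{2} or \pair{D}{2} (choosing whichever is not newly attacked by the rook's now-unblocked line) and steps into the safe space at \pair{C}{3} or \pair{D}{3}; the branches starting on the $A$-file are abandoned, exactly as the paper does with its one-line ``capture or move to \pair{C}{3}.'' Where exhaustive bookkeeping over all twelve squares \emph{is} needed is only in the ``reoccurs'' case, to keep the uncertainty set equal to the whole cluster --- and there your enumeration is correct.
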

\begin{proof}(Case distinction)
	\begin{figure}[!ht]
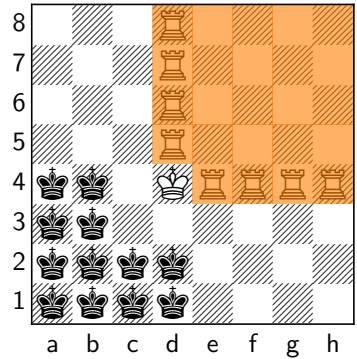

		\centering
		\newchessgame
		\setchessboard{
			showmover=false,
						boardfontsize=15pt,
			setpieces={Kd4, ka1,ka2,ka3,ka4, kb1,kb2,kb3,kb4, kc1,kc2, kd1,kd2, Rd5, Rd6, Rd7, Rd8, Re4, Rf4, Rg4, Rh4}
		}
		\chessboard[pgfstyle=color,opacity=0.6,color=orange, markarea=d5-h8, markarea=e4-h4]
		\caption{The orange squares indicate where the rook can be placed. The Black king can be on any of the squares \pair{A-B}{1-4}, or \pair{C-D}{1-2}. The White king is on \pair{D}{4}. The rook is on any of the orange squares. But there is only one rook on the board.}
	\end{figure}
	\begin{enumerate}
		\item Moving the king leads to the capture of it, or Black moves to the safe space (\pair{C}{3}).
		\item Moving the rook to the right or up allows us to reapply this Lemma.
		\item Move the rook left to the $A-C$-file. Black then might not have been on the now visible squares, but might be on \pair{D}{1-2}. It is now Black's turn, and the uncertainty extends to \pair{E}{1}, \pair{E}{2}. Thus, Lemma~\ref{lem:***} applies.
		\item Moving the rook down can be treated analogously to (3) or by symmetry.\qedhere
	\end{enumerate}
\end{proof}
The following deferred case-distinction concludes the overall strategy.
\begin{lemma}[**]\label{lem:**}
 Let it be White's turn with the White king be on any square of $\text{\pair{A-H}{1-8}} \setminus \text{\pair{A-E}{1-3}}$ (not adjacent to the Black king) and the rook on any on the squares of \pair{E-H}{3-8}.
 Let the Black king be anywhere on \pair{A-D}{1-2}. Then either this lemma reoccurs, Black captures a piece of White, or leads to Lemma~\ref{lem:***}.
\end{lemma}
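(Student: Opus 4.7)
The plan is a case analysis on White's next move, in the spirit of Lemmas~\ref{lem:*} and~\ref{lem:***}. For each legal White move I would exhibit Black's reply and verify that the resulting configuration is one of the three admitted outcomes: Lemma~\ref{lem:**} reoccurs, Lemma~\ref{lem:***} applies, or Black captures a White piece.

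First I would treat the rook moves. If the rook stays inside \pair{E-H}{3-8}, both pieces still meet the hypothesis and Black may remain in \pair{A-D}{1-2}, so Lemma~\ref{lem:**} reoccurs. Otherwise the rook either slides horizontally into a file among A--D (keeping rank 3--8) or drops vertically to rank 1 or 2 within files E--H; in each such move I split further by target square. In the representative case the rook moves to \pair{D}{3}, thereby probing whether Black sits on file D: if yes, Black moves sideways and its new uncertainty is exactly cluster (1) of Lemma~\ref{lem:***}; if no, Black's possibilities shrink within \pair{A-C}{1-2}, which is still covered by Lemma~\ref{lem:**}. Symmetric reasoning handles moves to files A, B, C and to rank 1--2 in files E--H; in the delicate subcase that the rook lands adjacent to some Black possibility (for instance \pair{E}{1} or \pair{E}{2}), Black captures it if actually located on the adjacent file D.

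Next I would consider moves of the White king. The rook remains in \pair{E-H}{3-8}, so only the king constraint has to be re-checked. If the king ends up adjacent to any potential Black square, Black captures, using the same argument as in Lemma~\ref{kingpush}. Otherwise the king stays outside \pair{A-E}{1-3} and not adjacent to Black, so the hypotheses of Lemma~\ref{lem:**} persist and the lemma reoccurs with Black still in \pair{A-D}{1-2}.

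The main obstacle will be verifying that after a rook slide to a dangerous square, Black's post-reply uncertainty fits inside one of the clusters $(1)$, $(2)$, $(3)$ of Lemma~\ref{lem:***} rather than a strictly larger shape. This requires file-by-file bookkeeping similar to the proof of Lemma~\ref{lem:*}: whenever the rook unveils a file without discovering Black, Black's possibilities on that file vanish; whenever the rook does discover Black, Black must step to an adjacent file and rank, producing exactly the required $2\times 2$ cluster. Since Lemma~\ref{lem:**} is itself an admissible outcome, any residual ambiguity that is not yet of cluster form can simply be absorbed into the next invocation of the lemma on the following move.
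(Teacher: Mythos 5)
Your overall shape (a case split on White's move, with Black replying so as to land in one of the admitted outcomes) matches the paper, but two of your key claims do not hold, and the hardest part of the case analysis is missing. First, when the rook leaves \pair{E-H}{3-8} you cannot fall back on ``Lemma~(**) reoccurs'': its hypothesis requires the rook to be in \pair{E-H}{3-8}, so after a move to, say, \pair{D}{3} the lemma is simply not applicable, and your closing remark that ``any residual ambiguity \dots can simply be absorbed into the next invocation of the lemma'' fails for the same reason. Worse, a rook on \pair{A-D}{3} seals off rank~3, so a Black king confined to \pair{A-C}{1-2} has lost his route back to the safe space --- that is genuine progress for White, not a harmless reoccurrence. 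The paper avoids this by letting Black play adversarially: Black declares himself to have been on the squares the rook did \emph{not} unveil (e.g.\ \pair{D}{1}, \pair{D}{2}) and extends his uncertainty to \pair{E}{1}, \pair{E}{2}, producing exactly a cluster of Lemma~(***). Whether this works, and whether Black can instead simply capture the rook, depends on where the White king stands (a rook on \pair{D}{3} survives only if protected), which is why the paper organizes the entire proof by the king's position rather than by the rook's move, a structuring device your plan lacks.

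Second, the genuinely delicate case --- White king on \pair{D}{4}, which blocks Black's access to the safe space, with the rook dropping to \pair{F-H}{1-2} or \pair{F-H}{1} --- is absent from your plan; your only remark about vertical rook drops is that Black captures when the rook lands next to file~$D$, which covers \pair{E}{1} and \pair{E}{2} but nothing further right. In those remaining cases Black can neither capture nor reach a (***) cluster in one step; the paper has to introduce three auxiliary positions $(X1)$, $(X2)$, $(X3)$ with uncertainty sets such as \pair{A-D}{1} together with \pair{A-B}{3-4}, and resolve them by further case analysis and board rotations before everything funnels back into Lemmas~(**) and~(***). None of these intermediate sets is of the form \pair{A-D}{1-2} or a $2\times 2$ cluster, so they cannot be ``absorbed'' as you suggest. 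You would need to supply this analysis (or exhibit a single invariant covering these shapes) for the proof to go through.
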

\begin{proof} (Case distinction)

	\begin{enumerate}
		\item Let the White king be on any square in ranks 5--8 \pair{A-H}{5-8}.
		\begin{enumerate}
			\item If the rook is on \pair{E-H}{3}:
			\begin{itemize}
				\item If the king is moved, the Black king can remain everywhere on \pair{A-D}{1-2} and this lemma reapplies.
				\item Moving the rook
				\begin{itemize}
					\item up or down allows Black to return to the safe space.
					\item left or right; then either the rook can be captured, or the Black king can stay in \pair{A-D}{1-2}, and this lemma reapplies.
				\end{itemize}
			\end{itemize}
			\item If the rook is on \pair{E-H}{4-8}:
			\begin{itemize}
				\item If the king moves, this lemma reapplies as above.
				\item Moving the rook
				\begin{itemize}
					\item up lets Black move to the safe space \pair{C}{3}.
					\item down to \pair{E-H}{3-7} leads to the Black king can stay on \pair{A-D}{1-2} and this Lemma reapplies.
					\item down to \pair{E-H}{1-2} let Black move to the safe space.
					\item to the left or right allows the Black king to move to the safe space.
				\end{itemize}
			\end{itemize}
		\end{enumerate}
		\item Let the White king be on \pair{A-C}{4}.
		\begin{enumerate}
			\item If the rook is on \pair{E-H}{4-8}:
			\begin{itemize}
				\item Moving the king leads to either the king being captured, or the Black king can stay on \pair{A-D}{1-2} and the lemma reapplies.
				\item Moving the rook
				\begin{itemize}
					\item down either results in capture of the rook or Black can move to the safe space \pair{E}{3}.
					\item up or right let Black move to \pair{E}{3} (safe space) or reapply the lemma.
					\item left either let Black move to the safe space (\pair{E}{3}) or the Black king can stay in \pair{A-D}{1-2} and the lemma reapplies.
				\end{itemize}
			\end{itemize}
			\item If the rook is on \pair{E-H}{3}:
			\begin{itemize}
				\item Moving the king either allows Black to capture it, or the Black king can stay on \pair{A-D}{1-2} and this lemma can reapply.
				\item Moving the rook
				\begin{itemize}
					\item down either results in capture or allows Black to move to the safe space \pair{E}{3}.
					\item  up or right allows the Black king to stay on \pair{A-D}{1-2} and reapply this lemma.
					\item  to the left:
					\begin{itemize}
						\item Moving it to \pair{E-G}{3} allows Black to stay on \pair{A-D}{1-2} and reapply this lemma.
						\item Moving it to \pair{A-C}{3}: Black chooses not to be there, extending the uncertainty to \pair{E}{1}, \pair{E}{2}, while maintaining its position on \pair{D}{1}, \pair{D}{2}, leading to \hyperref[lem:***]{Lemma~(***)}.
						\item Moving it to \pair{D}{3}: Black chooses to be on \pair{D}{2}, then moves to \pair{C}{1}, \pair{C}{2}, \pair{E}{1}, \pair{E}{2}, leading again to \hyperref[lem:***]{Lemma~(***)}.
					\end{itemize}
				\end{itemize}
			\end{itemize}
		\end{enumerate}
		\item Let the White king be on \pair{E-H}{4}. The cases are similar to 2, except that Black moves to the safe space square \pair{C}{3} instead of e.g. \pair{E}{3}.
		\item Let the White king be on \pair{D}{4} (\textbf{dangerous (!)} because the White king blocks the reachable squares of the safe space).
		\begin{itemize}
			\item Moving the king either allows Black to capture it, or the Black king can stay on \pair{A-D}{1-2}, and this lemma can reapply.
			\item Moving the rook left, up, or right results in \hyperref[lem:***]{Lemma~(***)}, the rook being captured, or the Black king can stay on \pair{A-D}{1-2} and this lemma reapplies.
			\item Moving the rook down to \pair{E-H}{3-7}, then the Black king can stay on \pair{A-D}{1-2} and this lemma reapplies.
			\item Moving the rook down to \pair{E}{1-2} allows Black to take the rook.
			\item Moving the rook down to \pair{F-H}{1-2}:
			\begin{itemize}
				\item to \pair{F}{2}: Black chooses to be on \pair{B}{2}, and can then move to \pair{A-C}{1}, or \pair{A-B}{3}.
				\begin{itemize}
					\item Move the White king
					\begin{itemize}
						\item Left/ Left-down, then capture.
						\item down: this situation is described as $(X1)$ at the end of this proof.
						\item anywhere else: Move to safe space (\pair{C}{3}).
					\end{itemize}
					\item Move the rook
					\begin{itemize}
						\item up: Then Black decides to be on \pair{A-C}{1}, then expanding to \pair{A-D}{2}, so this lemma reapplies.
						\item left/ right: either Black can take the rook, or Black decides to be on \pair{A}{3}, \pair{B}{3} and expands to \pair{A}{4}, \pair{B}{4} (Situation~$(X2)$, see below).
						\item down, Black decides to be on \pair{A-B}{3} and expands to \pair{A-B}{2-4} (Situation~$(X3)$, see below).
					\end{itemize}
				\end{itemize}
				\item The squares \pair{G}{2}, \pair{H}{2} behave analogously.
				\item Move the rook to \pair{F}{1}. Then Black chooses not to be on \pair{A-D}{1}. So, black can now be on \pair{A-D}{2}, \pair{A}{3}, \pair{B}{3}, and \emph{E2} (!).
				\begin{itemize}
					\item If White now moves the rook up (more than one square), then either Black takes or we can apply \hyperref[lem:***]{Lemma~(***)} as the king can now expand to \pair{D}{1} and \pair{E}{1}.
					\item If White moves the rook up one square, then the previous case with the rook on \pair{F-H}{2} can be applied as Black might be on \pair{B}{2}. So, the Black king can stay on \pair{A-D}{1-2} and this lemma reapplies.
					\item If White moves anything anywhere else, then Black can move to the safe space, as he can be on \pair{E}{2} or capture.
				\end{itemize}
				\item Analogously for \pair{G}{1}, \pair{H}{1}.
			\end{itemize}
		\end{itemize}
		\item If the king is on \pair{F-H}{1-3}
		\begin{itemize}
			\item Move the king: capture or Black can stay on \pair{A-D}{1-2} and reapply this lemma.
			\item Move the rook: 
			\begin{itemize}
				\item down to \pair{E-H}{1-2}: move to safe space.
				\item left to \pair{A-D}{3}: take the rook.
				\item left to \pair{A-D}{4-8}: move to safe space.
				\item otherwise the Black king can stay on \pair{A-D}{1-2} and reapply this lemma.
			\end{itemize}
		\end{itemize}
	\end{enumerate}
	
			\begin{figure}[]
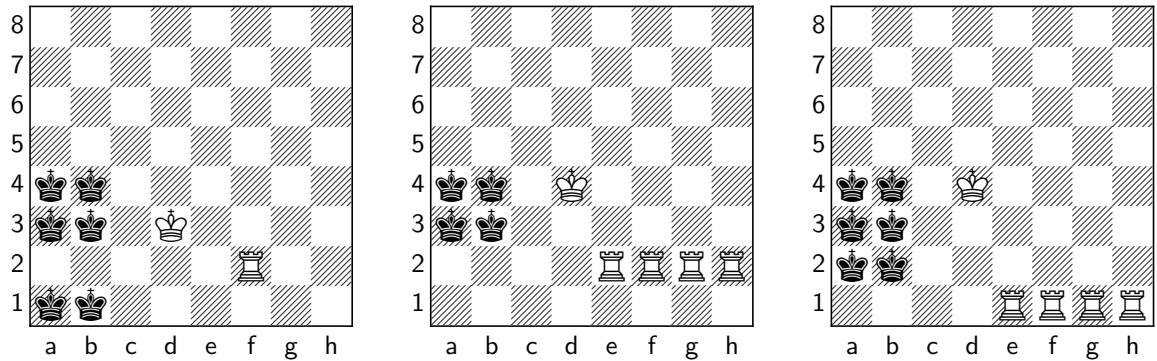

		\centering
		\newchessgame
		\setchessboard{
	showmover=false,
	boardfontsize=15pt,
	setpieces={Rf2, ka1,kb1,kc1kd1,ka3,kb3,ka4,kb4, Kd3}
}
\chessboard
		\setchessboard{
	showmover=false,
	boardfontsize=15pt,
	setpieces={Kd4, Re2, Rf2, Rg2, Rh2, ka3,ka4,kb3,kb4}
}
\chessboard
		\setchessboard{
	showmover=false,
	boardfontsize=15pt,
	setpieces={Kd4, Re1,Rf1,Rg1,Rh1,ka2,ka3,ka4,kb2,kb3,kb4}
}
\chessboard
		\caption{The situation described as (X1), (X2), and (X3) (from left to right)}
	\end{figure}

	(X1): Then Black can move to any of \pair{A-D}{1} and \pair{A-B}{3-4}. Now, White is to move.
	\begin{itemize}
		\item If the White king moves, then either capture it or \hyperref[lem:***]{Lemma~(***)} (\pair{A}{4}, \pair{A}{5}, \pair{B}{4}, \pair{B}{5} which by rotating the board results in \hyperref[lem:***]{Lemma~(***)}).
		\item If the rook moves:
		\begin{itemize}
			\item down; then \hyperref[lem:***]{Lemma~(***)} by moving to \pair{A-B}{5}.
			\item left; then take or \hyperref[lem:***]{Lemma~(***)}.
			\item right; then ends up in \hyperref[lem:***]{Lemma~(***)}.
			\item up to \pair{F}{3-4} (one or two squares): then Black decides to be on \pair{B}{4}, and expands to \pair{A-B}{3} and \pair{A-B}{5}, resulting in \hyperref[lem:***]{Lemma~(***)}. Otherwise, results in this \hyperref[lem:**]{Lemma~(**)} by moving to \pair{A-B}{1-4}, rotate board.
		\end{itemize}
	\end{itemize}
	
	(X2): It is White's turn.
	
	\begin{itemize}
		\item Moving the king, either take, or \hyperref[lem:***]{Lemma~(***)}.
		\item Move the rook:
		\begin{itemize}
			\item down, left, right, then take or \hyperref[lem:***]{Lemma~(***)}.
			\item up three squares to \pair{E-H}{5} (otherwise \hyperref[lem:***]{Lemma~(***)}): expand to \pair{A-D}{1-2} which leads to \hyperref[lem:***]{Lemma~(***)}.
		\end{itemize}
	\end{itemize}
	
	(X3): It is White's turn.
	
	\begin{itemize}
		\item Move king: either Black captures or \hyperref[lem:***]{Lemma~(***)} (could also move to safe-square).
		\item Move rook: Up four squares, then this lemma reoccurs by expanding to \pair{A}{1}, \pair{B}{1}, otherwise: capture or \hyperref[lem:***]{Lemma~(***)}\qedhere
	\end{itemize}
\end{proof}

\begin{corollary}
	If White starts with a king and a rook, and Black only has a king. The Black king is not in check and anywhere on \pair{C-F}{3-6}, and it is White's turn. Then White cannot force a win.
\end{corollary}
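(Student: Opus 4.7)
The plan is to describe an explicit survival strategy for Black and then argue, by combining the lemmas of this section, that it never leads to a loss. At each Black-to-move turn I would distinguish according to where the Black king currently sits. If the king is inside the safe space \pair{C-F}{3-6} and has an uncovered safe-space neighbour, Black simply stays in (or returns to) the safe space, choosing any uncovered such square. Otherwise Black is in one of the ``boundary'' situations analysed earlier, and the strategy dictated by Lemma~\ref{kingpush} and Lemmas~\ref{lem:***}, \ref{lem:*}, \ref{lem:**} tells Black how to move into a cluster from which those lemmas guarantee recovery.

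To turn this into a formal argument I would set up an invariant on the positions reachable by Black's strategy: after every pair of moves one of the following holds: (i) the Black king is on some square of \pair{C-F}{3-6} not covered by White; (ii) the configuration matches the hypothesis of Lemma~\ref{lem:***}; (iii) it matches the hypothesis of Lemma~\ref{lem:*}; or (iv) it matches the hypothesis of Lemma~\ref{lem:**}. The previous lemmas were proved precisely so that each of (ii)--(iv) either collapses to (i), captures a White piece, or re-establishes some case among (ii)--(iv) after White's next move. Lemma~\ref{kingpush} handles the transitions from (i) back into (i) or into (ii)--(iv). Consequently the invariant is preserved forever, and in particular the Black king never lands on an edge square \pair{A-H}{1}, \pair{A-H}{8}, \pair{A}{1-8}, \pair{H}{1-8} while it is White's turn.

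From the invariant the conclusion follows quickly. By the first lemma of this section, if Black is not on an edge at the start of White's turn, then White cannot capture the Black king on that turn. If at some point Black captures White's rook, then White is left with only a king and cannot win; if Black captures White's king, Black wins. Otherwise the game continues indefinitely with both pieces of White still on the board and the Black king never on an edge, so no capture of the Black king ever occurs and White does not win.

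The main obstacle, and the only thing that really needs care, is verifying that the four cases (i)--(iv) are exhaustive and that the transitions asserted by Lemmas~\ref{lem:***}, \ref{lem:*}, \ref{lem:**} actually close up under every White response considered in Lemma~\ref{kingpush}; in particular, I would double-check the handoffs where Black ``chooses'' which square of an uncertainty cluster to occupy (the key use of imperfect information), to make sure each such choice produces a configuration satisfying the hypotheses of the next lemma to be invoked. Once this bookkeeping is explicit, the corollary is just the composition of the previously proved lemmas.
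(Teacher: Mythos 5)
Your proposal matches the paper's intended argument: the corollary is stated without a separate proof precisely because it is the composition of the preceding lemmas, and your invariant (i)--(iv) with the non-edge capture lemma and the observation that losing the rook forfeits White's winning chances is exactly that composition made explicit. The caveat you flag about verifying the handoffs between the lemmas is the content of the lemmas' own case analyses, so nothing further is missing.
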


\section{Two Rooks and King vs. King}
Finally, we consider the endgame of a king and two rooks against a king.
As before, let White have a king and two rooks, and Black only a king. White starts and knows the starting position of Black.
We prove that for any starting position of the Black and White figures, there exists a winning strategy for White.
Also in classical chess, this is a win for White \cite{rook_win}.

The strategy consists of three stages, organized in three lemmas below. In the first stage, we will prove that White can move the two rooks on one rank/file while ensuring that the own king is not on it as well. 
Secondly, we will show that we can move the White king into one corner and the two rooks are adjacent to him. 
And lastly, we will prove that the two rooks and the king can force a checkmate from the edge position.

Let us start with the first stage: we move the two rooks such that they cover each other, and their own king is not on either the same rank or file.
\begin{lemma}\label{twoRooks1}
	Given that White starts with a king and two rooks, and Black only a king. It is White's turn. White knows the starting position of the Black king.
	
	Then within at most two moves, White can move the rooks such that they cover each other without their own king being on either the same rank or file as both of the rooks.
\end{lemma}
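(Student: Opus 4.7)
The plan is to select a target line---a rank (or, by symmetry, a file)---that avoids the White king, and to move both rooks onto it in two consecutive White moves, placing each rook at a square from which the Black king cannot capture it. Writing the White king's position as $K=(k_r,k_f)$ and the known Black king position as $B=(B_r,B_f)$, the first step is to pick a rank $r$ with $r\neq k_r$ and $|r-B_r|\geq 2$. Since we exclude at most four of the eight ranks (one for $k_r$ and at most three for $\{B_r-1,B_r,B_r+1\}$), at least four candidate ranks remain.

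Assuming such an $r$ can be reached by both rooks in one move each, the first White move sends one rook, say $R_1$, to some square $(r,x)$. Since $|r-B_r|\geq 2$, no square on rank $r$ is adjacent to $B$, so $R_1$ cannot be captured on Black's reply. After Black's move, the Black king lies at some $(B_r',B_f')$ with $|B_r'-B_r|\leq 1$, hence $|r-B_r'|\geq 1$. In the second White move, I would send the other rook to some $(r,y)$ with $y\neq x$, chosen not to be adjacent to $(B_r',B_f')$: this forbids at most three files on rank $r$, leaving at least five valid choices of $y$. Because neither king sits on rank $r$, nothing obstructs the segment between $(r,x)$ and $(r,y)$, so the two rooks cover one another; and by construction, the White king is off the common rank.

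The main obstacle is certifying that both rooks can physically reach rank $r$ in a single move. A rook at $(p,q)$ reaches rank $r$ in one move either by already lying on rank $r$ and sliding along it, or by moving along file $q$ to $(r,q)$, which requires the segment of file $q$ between $(p,q)$ and $(r,q)$ to be free of the White king and the partner rook. With only four pieces on the board, such obstructions are rare, but they can occur (for example when the White king sits on the file connecting the two rooks). When no single rank works for both rooks simultaneously, I would switch to a candidate file---the whole argument is symmetric under rank/file exchange---or swap which rook is moved first. A short case analysis over the relative positions of the White king and the two rooks confirms that at least one valid target line together with one valid move order always exists, yielding the desired configuration in at most two White moves.
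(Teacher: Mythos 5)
Your overall plan coincides with the paper's: pick a line (rank or file) that avoids the White king and lies outside the Black king's reach, ferry both rooks onto it in two moves, and note that they then cover each other. Two steps deserve attention. First, your safety argument for the second rook does not work as written: in Fog of War, White does not observe Black's reply, so you cannot choose $y$ ``not adjacent to $(B_r',B_f')$''; moreover, a rook that is not already on rank $r$ can only reach it by sliding along its own file, so it lands on the forced square $(r,b_2)$ rather than on a freely chosen $(r,y)$. The clean fix is the one the paper uses: take the target line at distance at least $3$ from Black's starting square --- excluding the $5\times 5$ box of squares Black can reach within two moves still leaves at least two admissible ranks and two admissible files after also discarding the king's line. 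Alternatively, observe that once the first rook sits on rank $r$ it guards that entire rank, so a Black capture of the second rook there is immediately refuted. Second, the reachability question you single out as the main obstacle is indeed the crux, and you defer it to an unperformed case analysis; to be fair, the paper's own proof is no more explicit here (it only remarks that the Black king cannot obstruct the rooks), so your write-up is, if anything, more candid about where the remaining work lies, but a complete argument should still verify --- using the availability of both safe ranks and safe files and the freedom in move order --- that some admissible line is reachable by both rooks in one move each.
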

\begin{proof}
	 Note that when Black starts on $(X,Y)$, there is at most a $5 \times 5$-square \pair{[X-2]-[X+2]}{[Y-2]-[Y+2]} in which Whites figures are at risk in Blacks second move.

	Therefore, there are at least two ranks/files available which are safe from the opponent's king after its first move and without the own White king. Note that the Black king cannot block any move of the rooks, as if he could, the rooks would be able to capture him. 
	
	In the first two moves, White can move the two rooks to one of the ranks/files. 
\end{proof}
Now we will see that the two rooks can assist their own king to move to a corner.
\begin{lemma}
	Given a situation after Lemma~\ref{twoRooks1}. Then White can move its own king in a corner, and place the two rooks are adjacent to him.
\end{lemma}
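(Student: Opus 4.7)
The plan is to use the barrier of mutually protecting rooks from Lemma~\ref{twoRooks1} as a wall that the Black king cannot cross: any attempt to capture one of the rooks is immediately answered by the other, so the Black king is permanently confined to one side of the barrier. I would further arrange (by choosing the barrier appropriately in Lemma~\ref{twoRooks1}, since White knows Black's initial position) that the barrier separates the Black king from the White king and that the target corner lies on the White-king side of the wall.

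The rearrangement proceeds under the invariant that, at every White move, each White rook is either on a common rank or file with the other rook (mutual coverage) or on a square adjacent to the White king (king coverage). First I would walk the White king, leaving both rooks in place on the barrier, to a square adjacent to one of the rooks; this is safe because the Black king cannot leave its side of the barrier. Then I would iterate a leapfrog step: with the king adjacent to one rook $R_1$ (so that $R_1$ is king-protected), move the other rook $R_2$ to a new rank or file closer to the target corner on which $R_2$ and $R_1$ again cover each other; the king then takes one step toward the corner, and the roles of $R_1$ and $R_2$ swap. Eventually the king reaches, for instance, \pair{A}{8}, and the rooks can be parked on two mutually covering adjacent squares such as \pair{B}{7} and \pair{B}{8}, both adjacent to the king.

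The hard part is verifying that each individual rook move during the leapfrog lands on a square that the Black king cannot capture on the next move, so that the invariant is never violated in transit. Since Black is confined to the opposite half of the board and the leapfrog happens in the neighborhood of the chosen corner, each transit square can be placed outside Black's one-move reach, and a short case check on the geometry around the corner confirms that an admissible target square for $R_2$ always exists. The total number of White moves is bounded because the White king approaches the chosen corner monotonically, so the procedure terminates in finitely many moves.
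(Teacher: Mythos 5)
The central gap is your opening assumption that the barrier from Lemma~\ref{twoRooks1} can be chosen so that it separates the Black king from both the White king and the target corner. Lemma~\ref{twoRooks1} only guarantees two safe lines that avoid the White king and the squares Black can reach in his first move or two; it says nothing about separation, and a separating line need not exist at all: if the White king stands inside, or shares a rank and a file range with, the $5\times 5$ block of squares the Black king may occupy, then no rank or file places the White king strictly on one side and all of Black's possible positions strictly on the other. Without that separation your plan collapses. The White king's unescorted walk to the barrier is then a walk through territory where the Black king may be lurking (stepping next to the unseen Black king loses immediately), and the claim that ``each transit square can be placed outside Black's one-move reach'' fails once Black may be anywhere on the same side as the White king, since his set of possible positions grows with every move.

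The paper's proof avoids this entirely by using the rooks not as a static wall but as a mobile screen: before each (non-diagonal) king step, the rooks are placed on the two lines directly ahead of the king, so that every square adjacent to the king's destination is covered either by a rook's line or by the king's previous square; the step is therefore safe no matter on which side of the board the Black king is. Your proposal also leaves unexamined a hazard the paper explicitly flags: the White king can block his own rooks' lines of sight, so the Black king can ``hide behind'' the White king, and a square nominally on a rook's rank or file need not actually be covered. Your leapfrog, in which the king and a rook are repeatedly adjacent near the corner, is precisely the configuration where this occurs; the ``short case check on the geometry around the corner'' that you defer is where the real content of the lemma lies, and as written it is not carried out.
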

\begin{proof}
	Note that rooks can move freely on the rank/file where the White king is not. This can be done, as the rooks cover each other, and their own king cannot block them.
	
	White moves their own king towards an edge by repeating the following strategy:
	\begin{itemize}
		\item Cover two files/ranks ahead of the own king, such that he can move freely there. If there is only one file/rank ahead, cover that one.
		\item Move the own king one square towards one edge (not diagonal). 
	\end{itemize}
	
	The second step can be done, as the rooks cover all squares which are adjacent to the new position of the own king (which are not covered by the king itself). One might note that diagonal moves are not always safe, and that it is possible for the Black king to \emph{hide} behind the White king such that he does not get seen by the rooks.

	Thus, when the White king arrives at the edge, he cannot move away from the rooks without the risk of running into Black's king. Instead, he has to move towards his own rooks.
	
	When the White king arrives at the field adjacent to the corner (which is blocked by a rook), it is easy to see that White can safely rotate its own pieces to free the corner from the rook and move the king here and the rooks adjacent to him.
\end{proof}
Finally, we see that the king and two rooks can finish by forcing a checkmate. 
\begin{lemma}
	Assume White starts with a king and two rooks, and Black only a king. It is White's turn.
	Suppose that the king is in one corner, and the two rooks are adjacent to him. 
	
	Then there is a strategy such that White always wins.
\end{lemma}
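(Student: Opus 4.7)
I will follow the same three-stage template used for the queen endgame: first, confine the Black king to a shrinking rectangle using the two rooks as a barrier; second, advance the barrier line by line; third, finish with a forced capture once Black is restricted to a single rank or file.

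Without loss of generality, let the White king be on \pair{A}{1} with its two rooks on two of $\{\text{\pair{A}{2}}, \text{\pair{B}{1}}, \text{\pair{B}{2}}\}$. In at most one safe move I first rearrange the rooks into the configuration where one rook sits on \pair{A}{2} and the other on \pair{B}{1}; both are then adjacent to (and hence protected by) the White king. The rook on \pair{A}{2} attacks the entire A-file and all of rank 2, while the rook on \pair{B}{1} attacks the entire B-file and all of rank 1. Together with the king, this immediately confines the Black king to the inner block \pair{C-H}{3-8}.

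In the advance phase I leap-frog the rooks outward while walking the king up the A-file (and later along the next file or rank) to keep providing local protection, mirroring how the queen and king cooperate in Lemmas~\ref{queen3} and~\ref{queen2}. At each iteration, one rook serves as a fixed barrier on the currently outermost line of the inner block, while the other rook---supported by the king one square behind---jumps one line further out and re-establishes a strictly wider barrier. Throughout I maintain the invariant that after each White move, the set of possible Black-king positions lies strictly inside the inner block defined by the two barriers. Consequently, every advancing rook move lands either on a square protected by the White king or on a square separated from every possible Black position by at least two ranks/files, and is therefore safe from capture. Once Black is restricted to a single rank or file, the king and the two rooks sweep along that line in a ladder-mate fashion analogous to Lemma~\ref{queen1}, forcing a capture in finitely many moves.

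The main obstacle is the safety argument for each rook move during the advance: one must verify, by a case analysis on the current inner rectangle and the positions of the three White pieces, that the White king can always reach a square adjacent to the next intended rook destination \emph{before} that rook moves there, so the rook is protected on arrival (or alternatively, that the landing square is outside the Black king's reach given the tracked possible-positions set). The key point enabling this is that the fixed barrier keeps the Black king strictly bounded away from the White king's walking path along the edge, so the king is always free to shadow the advancing rook without itself being exposed to capture.
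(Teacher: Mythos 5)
Your high-level plan coincides with the paper's: from the corner configuration, run a staircase (ladder) mate in which the two rooks form a two-line barrier that is pushed across the board while the White king escorts them so that neither rook can be captured. The mechanical difference is that the paper advances the pieces in lockstep, one square per cycle (Rook~1 from \pair{A}{2} up the $A$-file, Rook~2 from \pair{B}{1} up the $B$-file, king following on the $A$-file), so that every rook lands on a square whose approach squares all lie on the $A$- or $B$-file and are therefore already controlled; you instead leapfrog the rear rook over the stationary barrier rook onto the next line. Your variant has the advantage that one rook always holds the line behind Black, so containment is cleaner, but it shifts all of the difficulty onto rook safety.

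And that is where the gap is: the one claim carrying the entire lemma --- that each advancing rook move is safe on arrival --- is precisely the step you defer to an unspecified ``case analysis,'' and for the leapfrog it is not true as stated. Take the barrier rook on \pair{A}{4} holding rank~4 and the $A$-file, the rear rook on \pair{B}{3}, and Black known only to be in \pair{C-H}{5-8}. Leapfrogging the rear rook to \pair{B}{5} puts it adjacent to \pair{C}{5} and \pair{C}{6}, either of which may hold the Black king, so the rook must be protected the moment it arrives; but the candidate king posts for protecting \pair{B}{5} are occupied or blocked by the rook on \pair{A}{4} (the king cannot walk up the $A$-file past it), or are themselves adjacent to possible Black positions (\pair{B}{6}, \pair{C}{4}--\pair{C}{6}). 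So ``the king can always shadow the advancing rook'' requires an explicit, move-by-move itinerary for the king interleaved with the rook moves, together with a proof that the barrier is maintained while the king maneuvers --- and you give neither. (The lockstep version trades this problem for a different one: when a rook steps forward it momentarily uncovers the rank it just left, and one must argue Black cannot slip behind the barrier.) Either way, this safety-and-containment bookkeeping is the substance of the proof, not an afterthought, and announcing that it ``must be verified'' does not discharge it.
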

\begin{proof}
	Without loss of generality, let the White king be on \pair{A}{1}, and the two rooks on \pair{A}{2} (Rook 1) and \pair{B}{1} (Rook 2).
		The following strategy resembles the staircase mate in classical chess, but we have to be more careful, as we do not know the exact position of the Black king. Therefore, we have to cover the rooks with the White king such that the Black king cannot take one of them. The strategy proceeds as follows and can be repeated until checkmate:
	
		\begin{itemize}
		\item Move Rook 1 up one square.
		\item Move Rook 2 up one square.
		\item Move the White king one square up.
	\end{itemize}
	As in classical chess, this forces the Black king to move up (and right), as otherwise White would capture him. Eventually, the Black king has no moves left to escape.
\end{proof}
\begin{corollary}
	Given that White starts with a king and two rooks, and Black with a king. White begins and knows the initial position of the Black king.	
	Then White wins.
\end{corollary}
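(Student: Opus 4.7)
The plan is to chain the three preceding lemmas in sequence, since the corollary is essentially their composition. Starting from the given configuration (White to move with a king and two rooks, Black with only a king, Black's starting square known), I would first invoke Lemma~\ref{twoRooks1} to spend at most two White moves bringing the two rooks onto a common rank or file where they mutually defend each other, while the White king sits off that line. The invariant established here is crucial: from now on, if Black captures a rook, the other rook captures back, so neither rook is ever lost for free.

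Next, I would apply the second lemma to the resulting configuration. Its strategy marches the White king step by step toward the nearest edge, and then along the edge into a corner, with the two rooks repeatedly rotating to cover two ranks (or files) of safe squares immediately ahead of the king. When the king finally reaches the square diagonally adjacent to its target corner, a short cleanup sequence swaps the occupying rook out of the corner so that the White king can step onto it with both rooks adjacent. This delivers exactly the hypothesis of the final lemma.

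Finally, I would invoke the third lemma, whose staircase-style strategy (advance Rook~1, advance Rook~2, advance the king) pushes the Black king monotonically toward the far edge of the board; after finitely many iterations the Black king has no safe square and is captured on the next White move.

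The only point requiring care is the handover between stages: at each junction it must be White's turn, Black's king must not already occupy a square covered by White (otherwise White already captured and we are done), and the defensive invariant on the rooks must be preserved. Each lemma states its pre- and post-conditions in matching terms, so this is purely a bookkeeping check rather than a genuine obstacle; there is no new combinatorial content to argue, and the corollary follows immediately.
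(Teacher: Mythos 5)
Your proposal is correct and matches the paper's intent exactly: the corollary is simply the composition of the three preceding lemmas (bring the rooks onto a mutually defending line, escort the king to a corner, then execute the covered staircase), which is precisely how the paper derives it. Your extra remark about checking the handover conditions between stages is sound bookkeeping and introduces nothing that diverges from the paper's argument.
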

\section{Final Comments and Open Questions}
We showed that in Fog of War chess, given that one player has a king and queen and plays against a lone king, he can always guarantee a win. In contrast to classical chess, if one player has a king and a rook, then he cannot guarantee a win against a player with a lone king. However, adding one more rook for the player with the rook guarantees a win for him. 

In Fog of War chess, there exist many more open questions concerning endgames:

First and foremost, endgames using knights, bishops, or a combination of several pieces are yet to be answered. 
As in classical chess, endgames can also occur where the opponent has not just the king, but also a few more own pieces available.

Another relevant question is to minimize the number of moves one player needs to make to ensure a win. We suspect that our strategy for king and queen versus king cannot be performed within 50 moves, which might be relevant when one plays with a fifty-move rule. Potentially, another strategy could guarantee a win faster and within 50 moves. Our strategy for king and two rooks versus king however, should work within 50 moves.

Finally, it is worth mentioning that also other variants with limited knowledge about the board than the classical Fog of War variant might yield interesting results. 

In conclusion, Fog of War chess presents a rich field for theoretical exploration, with many open questions and potential for further research.

We hope that our work contributes to the understanding of endgames in Fog of War chess and opens up avenues for future research in this intriguing variant of chess.

\bibliography{bibliography.bib}

\appendix
\end{document}